\newtheorem{theorem}{Theorem}
\newtheorem{lemma}[theorem]{Lemma}
\newtheorem{corollary}[theorem]{Corollary}
\newtheorem{conjecture}[theorem]{Conjecture}
\theoremstyle{definition}
\newtheorem{definition}{Definition}
\newtheorem{remark}{Remark}
\newcommand{\E} {\mathbb{E}}
\renewcommand{\P} {\mathbb{P}}
\DeclareMathOperator*{\argmax}{arg\,max}
\newcommand{\beq}{ \begin{equation} }
	\newcommand{\eeq}{ \end{equation} }
\begin{document}
\title{Graph Matching in Correlated Stochastic Block Models for Improved Graph Clustering} 

%%%%%%
%\author{%
%  \IEEEauthorblockN{Anonymous Authors}
%  \IEEEauthorblockA{%
%    Please do NOT provide authors' names and affiliations\\
%    in the paper submitted for review, but keep this placeholder.\\
%    ISIT23 follows a \textbf{double-blind reviewing policy}.}
%}

%%%%%% Please only add the author names and affiliations for the FINAL
%%%%%% version of the paper, but NOT for the paper submitted for review!
%
%%%%%
%%%%% Single author, or several authors with same affiliation:
 \author{%
   \IEEEauthorblockN{Joonhyuk Yang and Hye Won Chung}\\
   \IEEEauthorblockA{School of Electrical Engineering\\
                     KAIST\\
                      \{joonhyuk.yang, hwchung\}@kaist.ac.kr}
                   }

\maketitle

\thispagestyle{empty}

%%%%%
%% Abstract: 
%% If your paper is eligible for the student paper award, please add
%% the comment "THIS PAPER IS ELIGIBLE FOR THE STUDENT PAPER
%% AWARD." as a first line in the abstract. 
%% For the final version of the accepted paper, please do not forget
%% to remove this comment!
%%

\begin{abstract}
 % THIS PAPER IS ELIGIBLE FOR THE STUDENT PAPER AWARD.
 % (Remove this comment for final version of paper.)
We consider community detection from multiple correlated graphs sharing the same community structure.
The correlated graphs are generated by independent subsampling of a parent graph sampled from the stochastic block model. The vertex correspondence between the correlated graphs is assumed to be unknown. We consider the two-step procedure where the vertex correspondence between the correlated graphs is first revealed, and the communities are recovered from the union of the correlated graphs, which becomes denser than each single graph. We derive the information-theoretic limits for exact graph matching in general density regimes and the number of communities, and then analyze the regime of graph parameters, where one can benefit from the matching of the correlated graphs in recovering the latent community structure of the graphs. \footnote{This research was supported by the National Research Foundation of Korea under grant 2021R1C1C11008539, and  by the MSIT, Korea, under the ITRC support program (IITP-2023-2018-0-01402) supervised by the IITP.}
\end{abstract}

\section{Introduction}\label{sec:Introduction}

Community detection, or graph clustering, is a core problem in data mining or machine learning, where the dataset is given in the form of a graph composed of vertices connected to each other based on the levels of interactions or similarity, and the goal is to partition the vertices into densely connected clusters. 
The stochastic block model (SBM) \citep{HLL83} is widely used as a generative model for community detection. 
In the SBM, $n$ vertices are partitioned into $k$ communities (either randomly or deterministically), and the vertex pairs from the same community are connected by an edge with probability $p\in[0,1]$, while the vertex pairs from different communities are connected with a lower probability $q\in[0,p)$.

One of the fundamental questions in the community detection is to identify the regime of parameters where the community detection is information-theoretically feasible/infeasible. 
When the community label of each vertex is uniformly sampled from $[k]$, given a graph $G\sim \text{SBM}(n,p,q,k)$ with $p=\frac{\alpha \log n}{n}$, $q=\frac{\beta \log n}{n}$ for positive constants $\alpha,\beta$ and $k=\Theta(1)$,  it was shown in  \cite{AS15,ABH15,Abbe17} that if $ \sqrt{\alpha}-\sqrt{\beta}>\sqrt{k}$ then the exact community recovery is possible. Moreover, it was shown that this limit is achievable by an efficient algorithm, such as Semidefinite Programming (SDP) \citep{CX16,HWX16a,HWX16b,ABKK17,LCX21}. Especially, the authors in \cite{ABKK17} found the precise threshold for community recovery with the SDP even for the case where $k$ increases in $n$. Suppose that the size of each community is the same as $m=\frac{n}{k}$ and the intra-/inter-community edge density is $p=\frac{\alpha \log m}{m}$ and $q=\frac{\beta \log m}{m}$, respectively, for positive constants $\alpha$ and $\beta$. Then, if $\sqrt{\alpha}-\sqrt{\beta}>1$, the community recovery is possible by an SDP as long as $k=o(\log n)$. On the other hand, the exact community recovery is impossible with any algorithm if $\sqrt{\alpha}-\sqrt{\beta}<1 $ for $k=n^{o(1)}$. %Moreover, \cite{yan2018provable} showed that for if $n_{\min}\geq C \left( \frac{\sqrt{np}}{p-q} \vee \frac{p \log n}{(p-q)^2}\right)$ for some constant $C$, where $n_{\min}$ is the smallest community size then community detection is possible. Through this result, it can be confirmed that as long as $n_{\min}$ is large enough, community recovery is possible regardless of the number of community $k$. 

The main question we address in this paper is whether one can improve this information-theoretic limit on the community recovery when an additional graph correlated with the original graph is given as side information. For this scenario, assume that there exists a parent graph $G$, and two subgraphs $G_1$ and $G_2'$ are obtained by independent subsampling of $G$, where every edge of $G$ is sampled independently with probability $s>0$. For a fixed permutation $\pi_*:[n]\to[n]$, let $G_2$ be the graph obtained by permuting the vertices of $G_2'$ by $\pi_*$. Then, can one benefit from having a correlated graph $G_2$ in addition to $G_1$ in recovering the community structure of $G_1$?

This problem has a close connection to the graph matching, or graph alignment, which has been widely studied with applications of social network analysis, bioinformatics, and machine learning. When the vertex correspondence between the two correlated graphs $G_1$ and $G_2$ can be revealed by graph matching, by recovering $G'_2$ of which the vertices are indexed in the same order as $G_1$, we can obtain a denser graph $G_1\cup G'_2$ and recover the community structure of $G_1$ from $G_1\cup G'_2$, even when it is impossible to do so by using $G_1$ alone. This problem has been studied in \cite{RS21} for two balanced communities where the average degree is logarithmic in the number of vertices.

We generalize this result by identifying the information-theoretic limits for graph matching in density $nps^2,nqs^2\gg 1$, $p,q=o(1)$ and for $k$ scaling in $n$ (Thm. \ref{thm:matching achievability} and \ref{thm:matching impossible}). Additionally, for the equal-sized communities of size $m=n/k$ and $p=\frac{\alpha \log m}{m}$, $q=\frac{\beta \log m}{m}$, we show that the information-theoretic limit for community recovery can be improved from $\sqrt{\alpha}-\sqrt{\beta}>\sqrt{1/s}$ to $\sqrt{\alpha}-\sqrt{\beta}>\sqrt{1/(1-(1-s)^2)}$ by having a correlated graph $G_2$ in addition to $G_1$, when $k=o(\log n)$ and an extra condition of $s^2(\alpha+(k-1)\beta)>1$ for graph matching is satisfied (Cor. \ref{cor:community recovery} and \ref{cor:community recovery impossible}).

\subsection{Correlated Stochastic Block Models}\label{sec:model}
To consider the community recovery problem from multiple correlated graphs, we focus on the correlated stochastic block models \citep{OGE16}.
Assume a parent graph $G \sim$ SBM$(n,p,q,k)$. %, where the graph is composed of $n$ vertices partitioned into $k$ communities and any pair of vertices from the same community is connected by an edge with probability $p$ while any pair from different communities are connected by $q<p$. 
The community label of each vertex is uniformly sampled from $[k]$. Let $\sigma=(\sigma_1,\dots, \sigma_n)$ denote the community labels of $n$ vertices. % and $\sigma_i$ denote the the community label of $i$-th vertex. 
%\begin{itemize}
%    \item  $n$ := The number of vertices.
%    \item  $k$ := The number of communities.
%    \item  $p$ := Probability that there is an edge between vertex pair belonging to the same community
%    \item  $q$ := Probability that there is an edge between vertex pair belonging to the different community
%\end{itemize}
%We will call $p$ intra-edge probability and $q$ inter-edge probability.
 %Let $\{1,2,\ldots,n\}$ be $n$ vertices and $\{1,2,\ldots,k\}$ be $k$ community labels. Every vertex is given one community label each. For example, if a vertex $i$ has community label $3$, we will write $\sigma_{i}=3$. And then, 
 Two subgraphs $G_1$ and $G_2'$ of $G$ are generated by sampling every edge of $G$ independently with probability $s\in [0,1]$. Note that for any distinct $i,j\in[n]$, 
$
		\P\left((i,j)\in \mathcal{E}(G_1) |  
			(i,j)\in \mathcal{E}(G'_2) \right)
			= s,
$ where $\mathcal{E}(G)$ is the edge set of graph $G$.
For a fixed permutation $\pi_{*} : [n] \rightarrow [n]$, we obtain the graph $G_{2}$ by permuting all vertices of $G'_{2}$ with $\pi_{*}$. Our goal is to find the regime, where $\pi_{*}$ can be discovered, so that $G_1\cup G_2'$ can be obtained. %, and also the regime where it is impossible to do so.

%To simplify the analysis, we will consider an event of approximately balanced communities.
%\begin{definition}\label{def:balanced community}
%    For $\epsilon>0$, define an event
%    \begin{equation}
%        \mathcal{F}_{\epsilon}:=\left\{\left(1-\frac{\epsilon}{2} \right)\frac{n}{k} \leq |V_{r}| \leq \left(1+\frac{\epsilon}{2} \right)\frac{n}{k} \text{ for } r\in[k] \right\}.
%    \end{equation} 
%\end{definition}
%When the probability that each vertex belongs to each community is uniform, $\mathcal{F}_{\epsilon}$ holds with probability $1-o(1)$, where $k^2 \log k = o(n)$. 
%Therefore, we can approximate the community size as $n/k$.

\subsection{Related work and our contribution}\label{sec:prior work}

Graph matching has been widely studied in the correlated Erd\H{o}s-R\'enyi (ER) model \citep{PG11}, where the parent graph $G$ is sampled from ER model, $G \sim \mathcal{G}\left(n, p\right)$. %, and the rest of steps of obtaining two graphs $G_1$ and $G_2$ with correlation $s$ is the same as that of the correlated SBM, described in Sec. \ref{sec:model}.
%First, we construct a parent graph $G \sim \mathcal{G}\left(n, p\right)$ which is ER graph. And we generate two subgraphs $G_1$ and $G'_2$ by sampling every edge of $G$ independently with edge sampling probability $s$. For a fixed permutation $\pi:[n]\rightarrow [n]$, let $G_2$ be defined as the graph obtained by permuting the vertices of $G'_2$ by $\pi$. The goal is to recover $\pi$ by observing correlated graphs $(G_1, G_2)$. 
In  \cite{CK16,WXY22}, the information-theoretic limit for exact matching was analyzed and it was shown that the exact matching is possible if $nps^2 > (1+\epsilon) \log n$ for any positive constant $\epsilon$. Furthermore, in \cite{GM20,WXY22,GML21,DD22}, the precise limit for partial recovery was analyzed and it was shown that partial recovery is possible if $nps^2 \geq \lambda_* +\epsilon$ for any positive constant $\epsilon$, where $p=n^{-a+o(1)}$ for $a \in(0,1)$ and  $\lambda_*$ is some specified constant \citep{DD22}.

In the correlated SBMs, the information-theoretic limit for exact matching has been studied mainly under the assumption that community labels are given. 
Cullina et al. \cite{CSKM16} proved that the exact matching is possible if $ \frac{\alpha+(k-1)\beta}{k}s^{2} >2 $ with $p=\frac{\alpha \log n}{n}$ and $q=\frac{\beta \log n}{n}$ for positive constants $\alpha,\beta$ and an integer $k$. Onaran et al. \cite{OGE16} showed that exact matching is possible if $
    s\left(1-\sqrt{1-s^{2}}\right)\frac{\alpha+\beta}{2}>3
$
for the case of two communities. However, in the case of $\alpha=\beta$, the correlated SBMs becomes the correlated ER model, where it has been known that the exact matching is possible if  $\alpha s^{2}>1$. Thus, we can see that the above bounds are not tight. Cullina et al. \cite{CSKM16} also showed that the exact matching is impossible if $ \frac{\alpha+(k-1)\beta}{k}s^{2} <1 $ where $k=O(n^c)$ for a sufficiently small $c$.

%In the correlated SBM, the community recovery problem has been studied a lot, but the graph matching problem is not. The reason is that the probability of generating edges between nodes depends on whether they belong to the same community or not, but we do not know this. To solve this problem, \cite{CSKM16} and \cite{OGE16} studied it for the case where community labels are given. 
%\cite{CSKM16} proved that exact matching is possible if $ \frac{a+(k-1)b}{k}s^{2} >2 $ with positive constant $k$, $p=\frac{a \log n}{n}$ and $q=\frac{b \log n}{n}$ for positive constant $a,b$. And \cite{OGE16} showed that exact matching is possible if $
%    s\left(1-\sqrt{1-s^{2}}\right)\frac{a+b}{2}>3
%$
%for two communities case. However, in the case of $a=b$, it becomes the correlated ER model. Thus as mentioned in Section \ref{sec:Introduction}, exact matching is possible if  $as^{2}>1$. Therefore, we can see that the above bound is not precise. Instead,  \cite{CSKM16} found the condition to be impossible for exact matching with factor 1, and the number of communities could be increased to $n^{\beta}$ for a sufficiently small $\beta$.

Recently, Racz and Sridhar \cite{RS21} found the precise information theoretic limit for exact matching for two balanced communities, matching the converse result, even when the community labels are not given. %If
%$
%    \frac{a+b}{2}s^{2} >1,
%$
%then the exact matching is possible for the two equal-sized communities. %However, \cite{RS21} also solved only the cases of $p=\frac{a \log n}{n}$, $q=\frac{b \log n}{n}$ and two balanced communities. 
From the results of the correlated ER model \citep{WXY22} and the correlated SBMs \citep{RS21} with two communities, we can conjecture that if there is no isolated vertex in the intersection graph $G_{1} \cap G'_{2}$, then the exact matching may be possible. In this work, we indeed show that for general correlated SBMs with $k$ balanced communities, the exact matching is possible if the average degree is greater than $\log n$. We prove this with an additional condition on $k$ depending on $p,q$, and $s$. Conversely, we also show that when the average degree is less than $\log n$, it is impossible to exactly match the graphs due to the existence of isolated vertices on graphs $G_1 \cap G'_2$. In the case of converse, $k$ can be $O(n^{t})$  for a sufficiently small positive constant $t$. Moreover, we extend the result of \citep{RS21} and show that there exist regimes where the community recovery becomes easier when a correlated graph is given as side information when $k=o(\log n)$. %  have found that when two correlated graphs are given, there are areas where community recovery is easier than when only one graph is given. We were also able to extend this result.

\section{Main Result}\label{sec:Main result}
Asymptotic dependencies are denoted with standard notations $O(\cdot), o(\cdot), \Omega(\cdot), \omega(\cdot), \Theta(\cdot)$ with $n\to\infty$.  Let $\wedge$ and $\vee$ denote the $\min$ and the $\max$ operator.
\subsection{Exact graph matching}

%We state our main result for 
\begin{theorem}[Achievability for exact matching]\label{thm:matching achievability}
	For $p,q = o(1)$ and $s \in[0,1]$ such that $
     nps^{2}, nqs^{2}=\omega(1),$ let $G_{1}$ and $G_{2}$ be two graphs given by the correlated SBMs defined in Sec \ref{sec:model}. There exist constants $C,c>0$ with the following property. For any arbitrarily small constant $\varepsilon>0$, if
	\begin{equation}\label{eq:thm matching degree}
		ns^{2}\left(\frac{p+(k-1)q}{k}\right)>(1+\varepsilon)\log n, \text{ and}
	\end{equation}
 \begin{equation}\label{eq:thm matching k}
     k \leq C\left(\sqrt{nps^{2}} \wedge nqs^{2}\wedge n^{c} \right),
 \end{equation}
	then there exists an estimator $\hat{\pi}$ such that $\hat{\pi} = \pi_{*}$ with probability at least $1-o(1)$.
\end{theorem}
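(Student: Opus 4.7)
The plan is to analyze a maximum likelihood estimator (MLE) of $\pi_*$ and to bound its error probability via a union bound over candidate permutations. Since the community labels are unknown, I would use the joint MLE
\[
(\hat\pi,\hat\sigma) \in \argmax_{\pi,\tau}\, \log \P(A,B\mid \pi,\tau),
\]
where $A, B$ are the adjacency matrices of $G_1, G_2$ and $\tau$ ranges over labelings in $[k]^n$. Defining
\[
\Delta(\pi) := \max_{\tau} \log \P(A,B\mid \pi,\tau) - \max_{\tau}\log \P(A,B\mid \pi_*,\tau),
\]
the event $\{\hat\pi \neq \pi_*\}$ is contained in $\bigcup_{\pi \neq \pi_*}\{\Delta(\pi) \geq 0\}$, so a union bound reduces the task to showing $\sum_{\pi \neq \pi_*} \P(\Delta(\pi) \geq 0) = o(1)$.

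The bulk of the work is classifying wrong permutations by their displacement $l(\pi) := |\{i : \pi(i) \neq \pi_*(i)\}|$. A standard counting argument gives $|\{\pi : l(\pi) = l\}| \leq n^l$. For each such $\pi$, I would decompose the log-likelihood difference — using the conditional edge-independence given the true labels $\sigma$ — into a sum over $O(nl)$ independent edge-pair comparisons, and apply a Chernoff/MGF bound. After optimizing the tilt parameter, this should yield a per-permutation bound of the form
\[
\P(\Delta(\pi) \geq 0) \leq \exp\!\bigl(-l \cdot ns^{2}\tfrac{p+(k-1)q}{k}(1+o(1))\bigr),
\]
and combined with the multiplicity $n^l$, the union bound over $l \geq 2$ sums to $o(1)$ precisely under condition~(\ref{eq:thm matching degree}). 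The dominant contribution comes from $l=2$ (transpositions), corresponding to the isolated-vertex obstruction in the intersection graph $G_1 \cap G_2'$, whose expected degree is $ns^2(p+(k-1)q)/k$ — exactly the quantity appearing in (\ref{eq:thm matching degree}).

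The main obstacle is handling the nuisance labelings jointly with a growing number of communities. For $\pi \neq \pi_*$, the maximizer $\hat\tau$ can differ substantially from the truth, and I must rule out that this extra flexibility creates spurious likelihood optima. This is precisely where condition~(\ref{eq:thm matching k}) should enter: the bound $k \leq C\sqrt{nps^{2}}$ controls the fluctuation of within- vs.\ between-community edge counts contributed by community-mixing permutations, while $k \leq C n^{c}$ ensures each community size concentrates around $n/k$ so that the conditional edge probabilities $p_{ij},q_{ij}$ behave as expected. The most delicate step I anticipate is the uniform per-$l$ bound for permutations that are mostly correct but introduce community mixing on a small set of vertices: there the signal in the likelihood ratio is partially masked by the label ambiguity, and establishing the exponential bound requires a careful refinement of the MGF argument that tracks, for each displaced vertex, how its neighbors partition across communities after the permutation.
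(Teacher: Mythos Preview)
Your plan contains a genuine gap, and it also misidentifies where condition~(\ref{eq:thm matching k}) is actually used.

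First, the estimator. The paper does \emph{not} use a joint MLE over $(\pi,\tau)$; it uses the label-free overlap maximizer
\[
\hat\pi\in\argmax_{\pi}\sum_{(i,j)\in\mathcal{E}}A_{i,j}B_{\pi(i),\pi(j)}.
\]
This sidesteps the nuisance-labeling issue entirely: one conditions on the true $\sigma$ and the balance event $\mathcal{F}_\epsilon$, and the analysis never has to control a maximum over $k^n$ labelings. Your joint-MLE route is strictly harder and not needed.

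Second, and more importantly, your claimed per-permutation bound
\[
\P(\Delta(\pi)\geq 0)\leq \exp\!\Bigl(-l\cdot ns^{2}\tfrac{p+(k-1)q}{k}(1+o(1))\Bigr)
\]
is only correct when the displacement $l$ is small. The number of mismatched vertex pairs for a permutation with displacement $l$ is $l(n-l)+\binom{l}{2}\approx l(n-l/2)$, not $nl$; when $l$ is a constant fraction of $n$ (or all of $n$) the exponent is essentially \emph{halved}. Under~(\ref{eq:thm matching degree}) alone, $n^{l}\cdot\exp\bigl(-\tfrac12(1+\varepsilon)l\log n\bigr)$ diverges, so the plain union bound fails in the large-$l$ regime. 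The paper handles this by a two-regime analysis: for small per-community displacements $x_r\leq\tfrac{\epsilon}{2}|V_r|$, the full exponent is available (Lemma~\ref{lem:M bound 1}); for large displacements, the paper augments the PGF with auxiliary variables $Y^{+}(\tau),Y^{-}(\tau)$ counting common edges on mismatched pairs, and shows that on the typical event $\{Y^{+}\geq z^{+},\,Y^{-}\geq z^{-}\}$ the PGF bound recovers the full exponent, while the complementary tails $\{Y^{+}<z^{+}\}$, $\{Y^{-}<z^{-}\}$ are separately small (Lemma~\ref{lem:prob bound 2}). It is precisely the concentration of $Y^{+}$ and $Y^{-}$ in this step that requires $k\leq C\sqrt{nps^{2}}$ and $k\leq C\,nqs^{2}$ respectively---not any label-ambiguity argument. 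Finally, the bound $k\leq n^{c}$ is used both for $\mathcal{F}_\epsilon$ and to control the combinatorial sum $\sum_{\ell\geq 2}\binom{k+\ell-1}{\ell}n^{-\delta\ell}$, since permutations are stratified by the tuple $(x_1,\ldots,x_k)$ rather than by the total displacement alone.
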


\begin{theorem}[Converse for exact matching]\label{thm:matching impossible}
    For $p,q,s \in[0,1]$ such that $ps^{2}=o(1)$, let $G_{1}$ and $G_{2}$ be two graphs given by the correlated SBMs defined in Sec \ref{sec:model}. For any arbitrarily small constant $\epsilon>0$, there exists a positive constant $t$ depending on $\epsilon$ with the following property. If
	\begin{equation}\label{eq:thm:matching impossible degree}
		ns^{2}\left(\frac{p+(k-1)q}{k}\right)<(1-\epsilon)\log n, \text{ and}
	\end{equation}
    \begin{equation}\label{eq:thm:matching impossible k}
          k=O(n^{t}),
    \end{equation}
	then for any estimator $\hat{\pi}$, we have  $\lim _{n \rightarrow \infty} \P\left(\hat{\pi}=\pi_{*}\right)=0$.    
\end{theorem}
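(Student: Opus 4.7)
The plan is to prove impossibility by exhibiting, with probability $1-o(1)$, a growing collection of vertex pairs that are structurally indistinguishable from the observer's perspective: for each such pair $\h{u,v}$, swapping $u\leftrightarrow v$ in the permutation maps $\pi_*$ to an equally-likely alternative, so the posterior mass at $\pi_*$ is forced to shrink. Such pairs will come from isolated vertices of the intersection graph $G_1\cap G_2'$, where $G_2'=\pi_*^{-1}(G_2)$, generalizing the approach used in \cite{CSKM16,RS21}. I would work conditionally on the true community labels $\sigma$, since conditional impossibility implies unconditional impossibility.

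The first step is to produce the isolated vertices. Conditional on $\sigma$, a vertex $i$ is isolated in $G_1\cap G_2'$ with probability approximately
\[
\exp\!\pa{-s^{2}n\tfrac{p+(k-1)q}{k}}\ \geq\ n^{-(1-\epsilon)+o(1)}
\]
by (\ref{eq:thm:matching impossible degree}) together with $ps^{2}=o(1)$, so the number $N$ of such vertices satisfies $\E[N\mid\sigma]\geq n^{\epsilon-o(1)}$. A second-moment calculation goes through because the isolations of distinct $i,j$ depend on edge-disjoint families apart from the shared edge $(i,j)$, inflating the pairwise probability only by $(1-O(s^{2}p))^{-1}=1+o(1)$; Chebyshev then yields $N\geq n^{\epsilon-o(1)}/2$ w.h.p. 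With $k=O(n^{t})$ and $t<\epsilon$, a pigeonhole argument places $\Omega(n^{\epsilon-t-o(1)})$ isolated vertices in some common community. Within this community, each isolated-in-intersection vertex has expected $(G_1\cup G_2')$-degree of order $\log n$, so the expected number of pairs with overlapping $(G_1\cup G_2')$-neighborhoods is at most polylogarithmic in $n$ times $n^{2(\epsilon-t)-1}$, which is $o(1)$ once $t<\epsilon/2$. A greedy matching among these isolated vertices therefore yields $K=\omega(1)$ vertex-disjoint pairs $\h{u,v}$, each with both endpoints isolated in $G_1\cap G_2'$, both in the same community, and with disjoint $(G_1\cup G_2')$-neighborhoods.

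For each such pair I would verify the exchangeability claim: $P(G_1,G_2'\mid\sigma)=P(G_1,\tau(G_2')\mid\sigma)$ with $\tau=(u\,v)$. Only the factors at positions $(u,w)$ and $(v,w)$ for $w\neq u,v$ can be affected. Isolation in the intersection forces each such position to have edge type in $\h{(0,0),(0,1),(1,0)}$, while disjointness of the union-neighborhoods ensures that for each $w$ at most one of the four entries $(G_1)_{u,w},(G_2')_{u,w},(G_1)_{v,w},(G_2')_{v,w}$ is nonzero; a short case check then confirms that swapping $(G_2')_{u,w}$ with $(G_2')_{v,w}$ merely redistributes this nonzero entry among the two positions, preserving the multiset of edge types. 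Because $\sigma_u=\sigma_v$, each per-factor likelihood depends only on the edge type and on $\sigma_w$, so the full product over $w$ is invariant. Applying this to each of the $K$ disjoint pairs, the stabilizer of the likelihood function under composition with $\pi_*$ contains $2^{K}$ permutations that are indistinguishable from $\pi_*$, so $\P(\hat\pi=\pi_*)\leq 2^{-K}\to 0$.

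The main obstacle I anticipate is the case analysis in the exchangeability step: the factorized likelihood of the correlated SBM is not symmetric under arbitrary vertex swaps, so exact likelihood equality hinges on both the intersection-isolation and the disjoint union-neighborhood conditions holding simultaneously, and the enumeration of edge-type quadruples must be done carefully. A secondary technical point is calibrating $t$ against $\epsilon$ (any $t<\epsilon/2$ should suffice) and handling the regime of very small $s$, where the $(G_1\cup G_2')$-degree bound degrades; the hypothesis $ps^{2}=o(1)$, possibly together with treating $s$ as bounded away from $0$, should close that gap.
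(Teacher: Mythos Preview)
Your approach shares the paper's core mechanism: produce many intersection-isolated vertices via a first/second-moment argument and use the resulting ambiguity to defeat any estimator. The difference is in how ambiguity becomes a bound on $\P(\hat\pi=\pi_*)$. You enforce exchangeability explicitly by restricting to same-community isolated pairs with disjoint $(G_1\cup G_2')$-neighborhoods, verify case-by-case that each transposition preserves the likelihood, and obtain a $2^{-K}$ bound. The paper instead invokes (an extension of) Lemma~3.4 of \cite{RS21}, which gives directly $\P(\hat\pi_{\mathrm{MAP}}=\pi\mid A,B,\sigma)\le\prod_r 1/|T_r^\pi|!$, with $T_r^\pi$ the intersection-isolated vertices in community $r$. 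The mechanism behind that lemma is that if $\pi$ is a posterior maximizer, then swapping any two same-community isolated vertices can only \emph{increase} the likelihood (the lone affected product, for edge-type pair $(0,1),(1,0)$, becomes one for $(0,0),(1,1)$), so by maximality it is preserved; iterating yields the full factorial orbit of equal-likelihood permutations. No neighborhood condition is needed, and the second-moment bound is run directly in community~1 under the balanced-community event, so no pigeonhole either.

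The disjoint-neighborhood step is where your argument loses the full generality of the theorem. Under \eqref{eq:thm:matching impossible degree} the expected $(G_1\cup G_2')$-degree is of order $(\log n)(2-s)/s$, not $\log n$, so once $s\to 0$ (permitted: the only hypothesis is $ps^2=o(1)$) your overlap count ``polylog times $n^{2(\epsilon-t)-1}$'' no longer holds, and the greedy pairing need not produce $\omega(1)$ good pairs. You flag this, but bounding $s$ away from $0$ is an assumption the theorem does not make; the paper's maximizer argument avoids the issue entirely and delivers the stronger factorial decay.
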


\begin{remark}[Gap between the achievability and converse] From Theorem \ref{thm:matching achievability} and \ref{thm:matching impossible}, it can be seen that there exists no gap for the density conditions between the achievability \eqref{eq:thm matching degree} and converse \eqref{eq:thm:matching impossible degree}, but there exists a gap between the conditions on $k$ \eqref{eq:thm matching k} and \eqref{eq:thm:matching impossible k}. It is open whether one can increase $k$ to $n^c$ for some constant $c>0$ in the achievability.
\end{remark}
\begin{comment}
    \begin{remark}
    Given $p, q,k$ and $s$ satisfying condition \eqref{eq:thm matching degree} and \eqref{eq:thm matching k}, our next goal is to find an efficient algorithm. Consider the case where community labels are given to make the problem easier. 
    
    If $p$ is large enough to be $nps^{2}/k> \log n$, we can apply seedless exact matching algorithm \citep{MRT21a,MWXY22} with constant correlation in each community.
    
    If $q>(1-\epsilon)^{-2}(1+\epsilon)\frac{2\log n}{ns^{2}}$, regardless of how small $p$ is or how large $k$ is, the condition (\ref{eq:thm matching degree}) is satisfied. Consider model with $nqs^{2}=4 \log n$, $nps^{2}=\log \log n$, and $k=O(\sqrt{\log \log n})$. Even in this case, exact matching will be possible by Theorem \ref{thm:matching achievability}. However, the number of edges inside the community is very sparse. Thus, we conject that algorithm for matching will use only outer edges.
\end{remark}
\end{comment}

\subsection{Exact community recovery}
Suppose that the community size is the same as $m=n/k$.
We will show that when two correlated SBMs  are given, community recovery becomes easier compared to the case of having only one graph through the matching of the two. By combining Theorem \ref{thm:matching achievability} with the information-theoretic limit from \cite{ABKK17}, we obtain the following result. 
\begin{corollary}\label{cor:community recovery}
    For $p=\frac{\alpha \log m}{m},\; q=\frac{\beta \log m}{m}$ with positive constants $\alpha, \beta$ and $s\in [0,1]$, let $G_{1}$ and $G_{2}$ be two graphs given by the correlated SBMs defined in Sec \ref{sec:model}. Suppose that 
    \begin{equation}\label{eq:cor:community recovery degree}
        s^{2}\left(\alpha + (k-1)\beta \right)>1, \; k=o(\log n)
    \end{equation} and 
    \begin{equation}\label{eq:cor:community recovery condition}
        \sqrt{\alpha}-\sqrt{\beta}>\sqrt{\frac{1}{1-(1-s)^{2}}}.
    \end{equation}
    Then, there exists an estimator $\hat{\sigma}$ such that $\hat{\sigma}=\sigma$ with probability at least $1-o(1)$ for the ground-truth community labels $\sigma=(\sigma_1,\dots, \sigma_n)$.
\end{corollary}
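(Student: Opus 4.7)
The plan is a two-step reduction: first I would invoke Theorem \ref{thm:matching achievability} to recover the vertex correspondence $\pi_*$ exactly with high probability, and then I would apply the SDP-based exact recovery threshold of \cite{ABKK17} to the densified graph $G_1 \cup G_2'$. Once $\hat\pi = \pi_*$, one de-permutes $G_2$ via $\hat\pi$ to form $H := G_1 \cup \hat\pi^{-1}(G_2) = G_1 \cup G_2'$, which is itself a stochastic block model on the same partition $\sigma$ but with a strictly larger effective edge density.

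First I would verify the hypotheses of Theorem \ref{thm:matching achievability}. Substituting $p = \alpha \log m / m$, $q = \beta \log m / m$, and $m = n/k$ into the left-hand side of (\ref{eq:thm matching degree}) yields
\[
\frac{ns^2(p + (k-1)q)}{k} = s^2\bigl(\alpha + (k-1)\beta\bigr)\log m.
\]
The strict inequality $s^2(\alpha + (k-1)\beta) > 1$ in (\ref{eq:cor:community recovery degree}), combined with $k = o(\log n)$ (which forces $\log m = \log n - \log k = (1-o(1))\log n$), makes this expression exceed $(1+\varepsilon)\log n$ for some constant $\varepsilon > 0$ and all sufficiently large $n$. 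The auxiliary bound (\ref{eq:thm matching k}) is also satisfied because $\sqrt{nps^2} = \sqrt{\alpha s^2 k \log m}$ and $nqs^2 = \beta s^2 k \log m$ each dominate any $k = o(\log n)$, and $k = o(n^c)$ holds for every $c>0$. Hence Theorem \ref{thm:matching achievability} provides an estimator $\hat\pi$ with $\P(\hat\pi = \pi_*) = 1 - o(1)$.

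Next I would analyze the union $H = G_1 \cup G_2'$. Because the two subsamplings from the parent graph $G$ are independent Bernoulli$(s)$ edge retentions, each parent edge is kept in at least one copy with probability $1-(1-s)^2$, independently across vertex pairs and independently of the community labels. Therefore $H \sim \mathrm{SBM}(n, p', q', k)$ with effective parameters
\[
\alpha' = \bigl(1-(1-s)^2\bigr)\alpha, \qquad \beta' = \bigl(1-(1-s)^2\bigr)\beta,
\]
in the same $\log m / m$ scaling. The SDP exact recovery threshold of \cite{ABKK17}, valid under the hypothesized $k = o(\log n)$, requires $\sqrt{\alpha'} - \sqrt{\beta'} > 1$, which rearranges to
\[
\sqrt{\alpha} - \sqrt{\beta} > \frac{1}{\sqrt{1-(1-s)^2}},
\]
precisely the hypothesis (\ref{eq:cor:community recovery condition}). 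A union bound
\[
\P(\hat\sigma \neq \sigma) \leq \P(\hat\pi \neq \pi_*) + \P(\text{SDP fails on } H) = o(1) + o(1)
\]
then completes the argument; note that the SDP failure probability is computed with respect to the unconditional SBM law of $H$, so no awkward conditioning on the matching event is needed.

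The argument is conceptually straightforward once the matching is in hand; the only subtlety is bookkeeping around the slack between $\log n$ and $\log m = \log n - \log k$ when $k$ is allowed to grow as $o(\log n)$, and this is exactly what the strict form of (\ref{eq:cor:community recovery degree}) is designed to absorb. No separate analysis of the SDP itself is required, since it is applied as a black box to a genuine SBM whose edges are mutually independent and whose community labels are unchanged by the union operation.
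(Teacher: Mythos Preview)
Your proposal is correct and follows essentially the same two-step reduction as the paper: invoke Theorem \ref{thm:matching achievability} under condition (\ref{eq:cor:community recovery degree}) to recover $\pi_*$ exactly, form the union graph $H=G_1\cup G_2'\sim\mathrm{SBM}(n,p(1-(1-s)^2),q(1-(1-s)^2),k)$, and then apply the exact-recovery threshold of \cite{ABKK17} under condition (\ref{eq:cor:community recovery condition}), combining the two via a union bound. Your write-up is in fact somewhat more explicit than the paper's in checking that (\ref{eq:cor:community recovery degree}) with $k=o(\log n)$ implies both (\ref{eq:thm matching degree}) and (\ref{eq:thm matching k}) after the substitution $p=\alpha\log m/m$, $q=\beta\log m/m$, including the $\log m=(1-o(1))\log n$ bookkeeping.
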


\begin{corollary}\label{cor:community recovery impossible}
    For $p=\frac{\alpha \log m}{m},\; q=\frac{\beta \log m}{m}$ with positive constants $\alpha, \beta$ and $s\in [0,1]$, let $G_{1}$ and $G_{2}$ be two graphs given by the correlated SBMs defined in Sec \ref{sec:model}. Suppose that 
    \begin{equation}\label{eq:cor:community recovery impossible condition}
      \sqrt{\alpha}-\sqrt{\beta}<\sqrt{\frac{1}{1-(1-s)^{2}}}\text{ and } k=n^{o(1)}.
    \end{equation} 
    Then, for any estimator $\hat{\sigma}$, we have $\lim _{n \rightarrow \infty} \P\left(\hat{\sigma}=\sigma\right)=0$.
\end{corollary}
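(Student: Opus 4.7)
The plan is a two-step information-theoretic reduction to the single-graph SBM converse of \cite{ABKK17}. First, I would hand the estimator oracle access to the permutation $\pi_{*}$ as free side information; since $\pi_{*}$ is fixed and $G_{2}=\pi_{*}(G'_{2})$ is a deterministic transform of $G'_{2}$, this can only lower the minimum error probability, so it suffices to prove impossibility of recovering $\sigma$ from the pair $(G_{1},G'_{2})$. Second, I would collapse $(G_{1},G'_{2})$ to its union $U=G_{1}\cup G'_{2}$ by showing that $U$ is a sufficient statistic for $\sigma$.

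For the sufficiency step, fix a pair $(i,j)$ and let $A,B'$ denote the adjacency matrices of $G_{1},G'_{2}$. Conditional on $\sigma$, the pair $(A_{ij},B'_{ij})$ attains $(1,1),(1,0),(0,1)$ with probabilities $s^{2},s(1-s),s(1-s)$ times a common multiplicative factor of $p$ (if $\sigma_{i}=\sigma_{j}$) or $q$ (otherwise). Consequently, conditional on $U_{ij}=1$, the three non-zero configurations have probabilities $s/(2-s),(1-s)/(2-s),(1-s)/(2-s)$, independently of whether $\sigma_{i}=\sigma_{j}$, because the common $p$ or $q$ factor cancels upon normalization; conditional on $U_{ij}=0$, the pair is deterministically $(0,0)$. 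Since edges are conditionally independent given $\sigma$, the conditional law of $(A,B')$ given $U$ does not depend on $\sigma$, so $U$ is a sufficient statistic, and the minimum exact-recovery error probabilities from $(G_{1},G'_{2})$ and from $U$ coincide.

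The marginal law of $U$ is $\mathrm{SBM}(n,p',q',k)$ with $p'=p(1-(1-s)^{2})$ and $q'=q(1-(1-s)^{2})$. Setting $\alpha'=\alpha(1-(1-s)^{2})$ and $\beta'=\beta(1-(1-s)^{2})$, the hypothesis \eqref{eq:cor:community recovery impossible condition} becomes $\sqrt{\alpha'}-\sqrt{\beta'}<1$, which is precisely the impossibility regime for exact recovery in the $k$-community balanced SBM at the logarithmic threshold with $k=n^{o(1)}$ proved in \cite{ABKK17}. Invoking that converse for $U$ and chaining through the two reductions yields the corollary.

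The main obstacle I expect is the sufficiency claim: a priori, the pair $(G_{1},G'_{2})$ resolves each union-edge into three sub-types (only in $G_{1}$, only in $G'_{2}$, or in both), and one might worry this extra labeling carries signal about $\sigma$. What saves the argument is that the conditional distribution of the sub-type is identical under $\sigma_{i}=\sigma_{j}$ and $\sigma_{i}\neq\sigma_{j}$, the common $p$ or $q$ factor cancelling upon normalization. After this observation, the remaining steps amount to a clean substitution of the effective densities into the single-graph SBM converse.
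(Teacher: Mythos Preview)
Your proposal is correct and is essentially the same argument as the paper's, just phrased in the language of sufficient statistics rather than simulation. The paper starts from a single graph $H\sim\mathrm{SBM}(n,p(1-(1-s)^2),q(1-(1-s)^2),k)$, randomly splits each edge of $H$ into the three sub-types with probabilities $s/(2-s),(1-s)/(2-s),(1-s)/(2-s)$ to produce $(H_1,H'_2)$, and then applies a permutation to obtain $(H_1,H_2)\stackrel{d}{=}(G_1,G_2)$; since the splitting and the permutation do not depend on $\sigma_H$, any estimator succeeding on $(G_1,G_2)$ would yield an estimator succeeding on $H$, contradicting the single-graph converse of \cite{ABKK17}. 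Your sufficiency computation is exactly the observation that makes this simulation well-defined---the conditional law of $(A_{ij},B'_{ij})$ given $U_{ij}$ is the same for intra- and inter-community pairs---and your oracle step (handing over $\pi_*$) is the dual of the paper's step of applying an independent permutation to $H'_2$. The two routes are equivalent by Fisher--Neyman; yours is a bit more direct since it avoids the explicit coupling construction, while the paper's simulation makes the ``no extra information'' point fully constructive.
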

Comparing  Corollary \ref{cor:community recovery} and Corollary \ref{cor:community recovery impossible}, we can see that Corollary \ref{cor:community recovery} requires an additional condition \eqref{eq:cor:community recovery degree}, which is not present in  Corollary \ref{cor:community recovery impossible}. This condition is the result obtained by substituting  $p=\frac{\alpha \log m}{m}, q=\frac{\beta \log m}{m}$  into  \eqref{eq:thm matching degree} and \eqref{eq:thm matching k}. Thus, these conditions are required to first match the vertices between $G_1$ and $G_2$, and then to proceed to the community recovery using $G_1\cup G_2'$.
Compared to the case of having only $G_1$, where we need $\sqrt{\alpha}-\sqrt{\beta}>\sqrt{1/s}$ for community recovery, the condition \eqref{eq:cor:community recovery condition} is relaxed since $s<1-(1-s)^2$.

As mentioned in Section \ref{sec:Introduction}, in the correlated ER model, if $nps^2 < \log n$, then the exact graph matching is impossible, but partial matching is still possible if  $nps^2 \geq \lambda_* +\epsilon$ for some constant $\lambda_*$. Similarly, in the correlated SBMs, partial matching may be possible even though the average degree $ns^{2}{\frac{p+(k-1)q}{k}}$ is less than $\log n$. If this conjecture is correct, community recovery may be possible by using partial graph matching even if \eqref{eq:cor:community recovery degree} does not hold. Given two correlated SBMs having two communities with $p=\frac{\alpha \log n}{n}$ and $q=\frac{\beta \log n}{n}$ for constants $\alpha,\beta>0$, Gaudio et al. \cite{GRS22} have recently shown that
if 
$
    \sqrt{\alpha}-\sqrt{\beta}>\sqrt{\frac{1}{1-(1-s)^{2}}}
$
and 
$
    s^{2}\left(\frac{\alpha+\beta}{2}\right)+s(1-s)\frac{(\sqrt{\alpha}-\sqrt{\beta})^{2}}{2}>1,
$
then community recovery is possible.  Moreover, community recovery is impossible if one of the two conditions is not held. This result gives us the precise information-theoretic limit for community recovery. We provide the similar conjecture in general $(p,q,k)$. %the following results.
\begin{conjecture}\label{con:recovery}
       For $p=\frac{\alpha \log m}{m},\; q=\frac{\beta \log m}{m}$ with positive constants $\alpha, \beta$ and $s\in [0,1]$,  let $G_{1}$ and $G_{2}$ be two graphs given by the correlated SBMs defined in Sec \ref{sec:model}. Suppose that 
    \begin{equation}\label{eq:con:community recovery condition}
        \sqrt{\alpha}-\sqrt{\beta}>\sqrt{\frac{1}{1-(1-s)^{2}}},
    \end{equation}
    and
    \begin{equation}\label{eq:con condition}
        s^{2}\left(\alpha + (k-1)\beta\right) + s(1-s)(\sqrt{\alpha}-\sqrt{\beta})^{2}>1
    \end{equation}
    Then, there exists an estimator $\hat{\sigma}$ such that $\hat{\sigma}=\sigma$ with probability at least $1-o(1)$.
\end{conjecture}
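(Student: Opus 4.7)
The plan is to extend the two-phase argument of Gaudio, R\'acz, and Sridhar \cite{GRS22}, proven for $k=2$, to general $k = o(\log n)$. The algorithm first performs a partial graph matching $\hat\pi$ between $G_1$ and $G_2$; it then runs a likelihood-based community recovery procedure on the triple $(G_1, G_2, \hat\pi)$. Conditions \eqref{eq:con:community recovery condition} and \eqref{eq:con condition} will correspond respectively to the information needed by the second phase on the ``matched'' part of the graph and by the second phase on the ``mismatched'' part.

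\emph{Phase 1 (almost-exact matching).} Under \eqref{eq:con condition} the average intersection-graph degree may fall below $\log n$, so by Theorem \ref{thm:matching impossible} exact matching can fail; however, the same first-moment/second-moment machinery underlying Theorem \ref{thm:matching achievability} should yield an estimator $\hat\pi$ that agrees with $\pi_*$ on a set $M \subseteq [n]$ with $|M| \geq n - n^{1-\delta}$ for some $\delta > 0$ with high probability, where $[n]\setminus M$ is essentially the set of vertices whose local neighborhoods in $G_1 \cap G_2'$ fail to be distinguishing. The analysis restricts the proof of Theorem \ref{thm:matching achievability} to the subgraph induced by vertices of sufficient intersection-degree.

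\emph{Phase 2 (community recovery).} For each vertex $v$ and each pair of candidate labels $a \neq b$, I write the log-likelihood ratio $L_v(a,b)$ as a sum over neighbors and decompose it into two pieces: a \emph{matched} contribution, behaving as a single SBM observation with intra/inter edge probabilities $p(1-(1-s)^2)$ and $q(1-(1-s)^2)$, yielding a Chernoff--Hellinger term that is tight exactly at \eqref{eq:con:community recovery condition}; and a \emph{mismatched} contribution from vertices in $[n]\setminus M$ and from XOR-edges whose alignment is unknown, producing the extra $s(1-s)(\sqrt\alpha-\sqrt\beta)^2$ term. A Chernoff bound would then give a single-vertex misclassification probability at most $m^{-J+o(1)}$, where
\begin{equation*}
J := s^{2}(\alpha + (k-1)\beta) + s(1-s)(\sqrt{\alpha}-\sqrt{\beta})^{2}
\end{equation*}
is precisely the left-hand side of \eqref{eq:con condition}. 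Since $k = o(\log n)$ gives $\log m = (1-o(1))\log n$, a union bound over $n$ vertices and $\binom{k}{2}$ label swaps yields vanishing total error as soon as $J > 1$.

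\emph{Main obstacle.} The key difficulty is the coupling between matching errors (Phase 1) and classification errors (Phase 2): when $\hat\pi$ errs at a vertex $v$, the likelihood $L_v$ is evaluated against a biased reference distribution and the clean Chernoff bound breaks down. The most promising route is sample splitting, in which disjoint edge subsets are used for the two phases at the cost of rescaling $s$, or the martingale/exchangeability decoupling of \cite{GRS22}. The hard part will be carrying out this decoupling \emph{uniformly in $k$}, so that the error bounds remain tight throughout $k = o(\log n)$ and the sharp threshold $J > 1$ is attained; in particular, ruling out exceptional configurations whose contribution to the union bound grows faster than sub-polynomially in $k$ will require a finer local analysis than is needed in the $k=2$ case.
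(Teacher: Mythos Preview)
The statement is Conjecture~\ref{con:recovery}, which the paper does \emph{not} prove: it is explicitly listed as an open problem, and the authors write that proving this conjecture ``is an interesting open direction.'' There is therefore no proof in the paper to compare your proposal against. Your high-level plan of extending the partial-matching-plus-recovery argument of \cite{GRS22} from $k=2$ to $k=o(\log n)$ is exactly the route the paper itself suggests, and your identification of the matching/classification decoupling as the chief obstacle is on target.

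That said, there is a concrete confusion in your Phase~2 bookkeeping. You claim the single-vertex misclassification probability is $m^{-J+o(1)}$ with $J$ equal to the left-hand side of \eqref{eq:con condition}, but $J$ is not a classification exponent: the term $s^2(\alpha+(k-1)\beta)$ is the intersection-graph mean degree (in units of $\log m$), which governs \emph{matching}, not community recovery. In the scheme of \cite{GRS22}, condition~\eqref{eq:con:community recovery condition} is the recovery threshold---the Chernoff--Hellinger exponent on the union graph is $(1-(1-s)^2)(\sqrt\alpha-\sqrt\beta)^2$, and \eqref{eq:con:community recovery condition} is precisely the statement that this exceeds~$1$---whereas condition~\eqref{eq:con condition} is what Phase~1 needs for almost-exact matching, the $s(1-s)(\sqrt\alpha-\sqrt\beta)^2$ correction reflecting the boost that community structure gives to alignment. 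Your proposed decomposition of $L_v(a,b)$ into matched and mismatched pieces does not add up to $J$, and the union bound in Phase~2 should close under \eqref{eq:con:community recovery condition}, not under $J>1$. Assigning the two hypotheses to the correct phases is prerequisite to making the outline work; the genuinely hard part remains what you flagged, but the target inequalities for each phase must first be set straight.
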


\subsection{Multiple correlated stochastic block models}\label{sec:multiple}
This section considers the case where there are $r$ correlated SBMs and introduces the conditions for community recovery using those $r$ correlated SBMs.% making community recovery easier using graph matching when $r$ correlated graphs are given than only one graph is given.

Consider the parent graph $G\sim \text{SBM}(n,p,q,k)$. We generate $r$ graphs $G_{1},G'_{2},G'_{3},\ldots ,G'_{r}$ by sampling every edge of $G$ independently with probability $s$ total $r$ times. Let $\pi^{2}_*, \ldots ,\pi^{r}_*$ be $r-1$ permutations $\pi^{j}_* :[n] \rightarrow [n]$ for $j=2,\ldots ,r$. We obtain $G_{j}$ by permuting all vertices of $G'_{j}$ with permutation $\pi^{j}_*$ for $j=2,\ldots ,r$.

\begin{corollary}\label{cor:multiple achieve}
    For $p=\frac{\alpha \log m}{m},\; q=\frac{\beta \log m}{m}$ with positive constants $\alpha, \beta$ and $s\in [0,1]$, let $G_{1},G_{2}, \dots, G_r$ be $r$-correlated SBMs. Suppose that 
    \begin{equation}\label{eq:cor:multiple matching}
      s^2 \left(\alpha + (k-1)\beta \right)>1, \; k=o(\log n) 
    \end{equation}
    and
    \begin{equation}\label{eq:cor:multiple achieve}
        \sqrt{\alpha}-\sqrt{\beta}>\sqrt{\frac{1}{1-(1-s)^{r}}}.
        \end{equation} 
    Then, there exists an estimator $\hat{\sigma}$ such that $\hat{\sigma}=\sigma$ with probability at least $1-o(1)$.
\end{corollary}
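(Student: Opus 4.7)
The plan is to mimic the two-step strategy that establishes Corollary \ref{cor:community recovery}: first reveal each unknown vertex correspondence $\pi_*^j$ between $G_1$ and $G_j$ ($j=2,\ldots,r$) by graph matching, then apply the standard single-graph SBM exact recovery algorithm on the densified union $H := G_1 \cup G_2' \cup \cdots \cup G_r'$, where $G_j'$ denotes $G_j$ after un-permuting by the recovered $\hat{\pi}^j$.

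For the matching step, I would apply Theorem \ref{thm:matching achievability} to each pair $(G_1,G_j)$ separately. Under the hypotheses of the corollary we have $m=n/k$ and $k=o(\log n)$, so $\log m = (1-o(1))\log n$; condition \eqref{eq:thm matching degree} then reduces to $s^2(\alpha+(k-1)\beta)\log m > (1+\varepsilon)\log n$, which is implied by the first half of \eqref{eq:cor:multiple matching} once $\varepsilon>0$ is chosen small enough. Condition \eqref{eq:thm matching k} also holds: since $nps^2 = k\alpha s^2 \log m$ and $nqs^2 = k\beta s^2 \log m$, the quantities $\sqrt{nps^2}$ and $nqs^2$ grow at least like $\sqrt{\log n}$ and $\log n$ respectively, and both dominate $k=o(\log n)$ for large $n$. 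Hence, for each $j\in\{2,\ldots,r\}$, Theorem \ref{thm:matching achievability} delivers $\hat{\pi}^j = \pi_*^j$ with probability $1-o(1)$, and a union bound over the finitely many pairs yields that all $r-1$ matchings succeed simultaneously with probability $1-o(1)$.

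For the community recovery step, condition on the event that every $\hat{\pi}^j$ is correct, so that the unpermuted graphs $G_j'$ are available. Then $H$ is determined by the parent graph $G$ together with the $r$ independent subsamplings, and for any pair of distinct vertices $(i,j)$ the probability that $\{i,j\}$ is an edge of $H$ equals $\P(\{i,j\}\in\mathcal{E}(G))\cdot (1-(1-s)^r)$, i.e., $p(1-(1-s)^r)$ if $\sigma_i=\sigma_j$ and $q(1-(1-s)^r)$ otherwise; independence across distinct pairs is inherited from the edge-independence of $G$ and the independence of the subsamplings. Thus $H$ is distributed as $\mathrm{SBM}(n,\alpha'\log m/m,\beta'\log m/m,k)$ with $\alpha' := \alpha(1-(1-s)^r)$ and $\beta' := \beta(1-(1-s)^r)$. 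The sharp single-graph exact recovery threshold from \cite{ABKK17} now gives exact recovery on $H$ provided $k=o(\log n)$ and $\sqrt{\alpha'}-\sqrt{\beta'}>1$; the latter rearranges to $\sqrt{\alpha}-\sqrt{\beta} > 1/\sqrt{1-(1-s)^r}$, which is precisely \eqref{eq:cor:multiple achieve}.

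Since the proof is a clean composition of Theorem \ref{thm:matching achievability} (used $r-1$ times) with the single-graph threshold of \cite{ABKK17}, I do not anticipate a substantial new obstacle. The only point that warrants a brief verification is the identification of $H$ as a bona fide SBM with the stated intra- and inter-community edge probabilities, which follows immediately from the independence of the edges of $G$ together with the independence of the $r$ subsamplings.
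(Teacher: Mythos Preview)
Your approach is correct and essentially identical to the paper's: match $G_1$ with each $G_j$ via Theorem~\ref{thm:matching achievability}, take a union bound over the $r-1$ matchings, then apply the single-graph exact-recovery threshold of \cite{ABKK17} to the union $H\sim\mathrm{SBM}\bigl(n,p(1-(1-s)^r),q(1-(1-s)^r),k\bigr)$. Your check of \eqref{eq:thm matching k} is stated a bit loosely---the lower bound $\sqrt{nps^2}\gtrsim\sqrt{\log n}$ alone does not dominate an arbitrary $k=o(\log n)$---but the actual value $\sqrt{nps^2}=s\sqrt{k\alpha\log m}$ makes $k\le C\sqrt{nps^2}$ equivalent to $k\le C^2s^2\alpha\log m$, which does follow from $k=o(\log n)$, so the conclusion stands.
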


\begin{corollary}\label{cor:multiple impossible}
    For $p=\frac{\alpha \log m}{m},\; q=\frac{\beta \log m}{m}$ with positive constants $\alpha, \beta$ and $s\in [0,1]$,  let $G_{1},G_{2}, \dots, G_r$ be $r$-correlated SBMs. Suppose that 
    \begin{equation}\label{eq:cor:multiple impossible}
        \sqrt{\alpha}-\sqrt{\beta}<\sqrt{\frac{1}{1-(1-s)^{r}}} \text{ and } k=n^{o(1)}.
        \end{equation} 
   Then, for any estimator $\hat{\sigma}$, we have $\lim _{n \rightarrow \infty} \P\left(\hat{\sigma}=\sigma\right)=0$.
\end{corollary}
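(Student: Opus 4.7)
The plan is to reduce the $r$-graph recovery problem to a single-graph SBM instance and then invoke the converse of \cite{ABKK17}. I would first reveal the true permutations $\pi^{2}_*,\ldots,\pi^{r}_*$ to the estimator as side information: since extra information can only decrease the Bayes error, any impossibility in this augmented model implies the same for the original one. Once the permutations are revealed, the data is equivalent to the $r$-tuple $(G_1,G_2',\ldots,G_r')$ on the common vertex set $[n]$.

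Next I would argue that the union $G^{\cup}:=G_1\cup G_2'\cup\cdots\cup G_r'$ is a sufficient statistic for $\sigma$. For each unordered pair $(i,j)$, the vector $x:=(A^{(1)}_{ij},\ldots,A^{(r)}_{ij})\in\{0,1\}^r$ is obtained by first placing the edge in the parent graph with probability $p$ or $q$ according to whether $\sigma_i=\sigma_j$, and then sampling each coordinate independently as $\mathrm{Bernoulli}(s)$. Writing $\ell(x):=\sum_\ell x_\ell$ and $U(x):=\bigvee_\ell x_\ell$, one checks directly that
\[
\P\bigl(x \mid \sigma\bigr) \;=\; s^{\ell(x)}(1-s)^{r-\ell(x)}\cdot g\bigl(U(x),\sigma_i,\sigma_j\bigr),
\]
where $g$ depends on $(\sigma_i,\sigma_j)$ only through the equality pattern and on $x$ only through $U(x)$. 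Since the pairs are conditionally independent given $\sigma$, Fisher--Neyman factorization applied pair-by-pair implies that the collection $\{U_{ij}\}_{i<j}$, i.e., the union graph $G^{\cup}$, is sufficient for $\sigma$. Consequently, the minimum Bayes error based on $(G_1,G_2',\ldots,G_r')$ equals that based on $G^{\cup}$ alone.

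The union graph retains each parent edge independently with probability $1-(1-s)^r$, so $G^{\cup}\sim \mathrm{SBM}(n,p',q',k)$ with $p'=\alpha'\log m/m$, $q'=\beta'\log m/m$, $\alpha':=\alpha\bigl(1-(1-s)^r\bigr)$, and $\beta':=\beta\bigl(1-(1-s)^r\bigr)$. The converse in \cite{ABKK17}, valid whenever $k=n^{o(1)}$, rules out exact recovery in this single SBM whenever $\sqrt{\alpha'}-\sqrt{\beta'}<1$. Since $\sqrt{\alpha'}-\sqrt{\beta'}=(\sqrt{\alpha}-\sqrt{\beta})\sqrt{1-(1-s)^r}$, this is exactly condition \eqref{eq:cor:multiple impossible}, yielding the desired impossibility.

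The main obstacle is the sufficient-statistic step: one must verify that the multivariate edge pattern across the $r$ correlated graphs contains no information about the community labels beyond the union indicator. The calculation above makes this precise; intuitively, conditional on the parent edge being present, the $r$ $\mathrm{Bernoulli}(s)$ outcomes carry no further signal about $\sigma$, and all of the community information is encoded in whether the parent edge exists, which is exactly what the union records. Once this reduction is in place, the distributional identification of $G^{\cup}$ and the invocation of the single-graph SBM converse are routine.
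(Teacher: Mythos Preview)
Your proof is correct and follows essentially the same route as the paper: reduce to the union graph $G^{\cup}\sim\mathrm{SBM}\bigl(n,p(1-(1-s)^r),q(1-(1-s)^r),k\bigr)$ and invoke the single-graph converse of \cite{ABKK17}. The only difference is in how the reduction is justified. The paper (following \cite{RS21}) argues by \emph{simulation}: starting from $H$, it explicitly constructs $(H_1,H_2',\ldots,H_r')$ by splitting each edge of $H$ according to the distribution $\{s^{\ell}(1-s)^{r-\ell}/(1-(1-s)^r)\}$ over nonzero patterns, then applies independent random permutations, and checks that the resulting tuple has the same law as $(G_1,G_2,\ldots,G_r)$; any estimator on the correlated graphs therefore induces one on $H$. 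You instead first reveal the permutations (a legitimate genie), and then establish sufficiency of $G^{\cup}$ directly via the Fisher--Neyman factorization of the per-pair likelihood. Your factorization is correct: for $x\neq 0$ one has $\P(x\mid\sigma)=p\,s^{\ell(x)}(1-s)^{r-\ell(x)}$ (resp.\ $q$), while for $x=0$ the probability is $1-p(1-(1-s)^r)$, and both cases fit the form $h(x)\,g(U(x),\sigma)$. The two arguments are dual formulations of the same sufficiency fact; yours is slightly more economical, while the paper's coupling is more self-contained in that it does not appeal to the factorization theorem.
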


\section{Outline of Proof}\label{sec:outline of proof}

\subsection{Notation}
For any positive integer $n$, let $[n]=\{1,\ldots,n\}$. Recall that vertex set is $V=[n]:=\{1,2, \ldots, n\}$. Let $V_{r}:=\left\{i \in[n]: \sigma_{i}=r \right\}$ for $r \in [k]$ denote the vertex set in the $r$-th community, where $\sigma_i$ is the community the vertex $i$ belongs.

Let $\mathcal{E}:=\{\{i, j\}: i, j \in[n], i \neq j\}$ denote the set of all unordered vertex pairs. Given a community labeling $\sigma=(\sigma_1,\ldots,\sigma_n)$, we define the intra-community vertex pair set $\mathcal{E}^{+}(\sigma):=\left\{(i, j) \in \mathcal{E}: \sigma_{i} = \sigma_{j}\right\}$ and the inter-community vertex pair set $\mathcal{E}^{-}(\sigma):=\left\{(i, j) \in \mathcal{E}: \sigma_{i} \neq \sigma_{j}\right\}$. Note in particular that $\mathcal{E}^{+}(\sigma)$ and $\mathcal{E}^{-}(\sigma)$ partition $\mathcal{E}$.
%We next introduce some notations pertaining to the construction of the correlated SBMs. 
Let $A$ and $B$ denote the adjacency matrices of $G_{1}$ and $G_{2}$, respectively.

\begin{comment}
    and let $B^{\prime}$ be the adjacency matrix of $G_{2}^{\prime}$.Note that, by construction, we have that $B_{i, j}^{\prime}=B_{\pi_{*}(i), \pi_{*}(j)}$ for every $i, j$. By the construction of the correlated SBMs, we have the following probabilities for every $(i, j) \in \mathcal{E}$ :
$$
\begin{aligned}
	&\mathbb{P}\left(\left(A_{i, j}, B_{i, j}^{\prime}\right)=(1,1) \mid \sigma\right)= \begin{cases}s^{2} p & \text { if } \sigma_{i}=\sigma_{j} \\
		s^{2} q & \text { if } \sigma_{i} \neq \sigma_{j}\end{cases} \\
	&\mathbb{P}\left(\left(A_{i, j}, B_{i, j}^{\prime}\right)=(1,0) \mid \sigma\right)= \begin{cases}s(1-s) p & \text { if } \sigma_{i}=\sigma_{j}, \\
		s(1-s) q & \text { if } \sigma_{i} \neq \sigma_{j}\end{cases} \\
	&\mathbb{P}\left(\left(A_{i, j}, B_{i, j}^{\prime}\right)=(0,1) \mid \sigma\right)= \begin{cases}s(1-s) p & \text { if } \sigma_{i}=\sigma_{j}, \\
		s(1-s) q & \text { if } \sigma_{i} \neq \sigma_{j}\end{cases} \\
	&\mathbb{P}\left(\left(A_{i, j}, B_{i, j}^{\prime}\right)=(0,0) \mid \sigma\right)= \begin{cases}1-p\left(2 s-s^{2}\right) & \text { if } \sigma_{i}=\sigma_{j}, \\
		1-q\left(2 s-s^{2}\right) & \text { if } \sigma_{i} \neq \sigma_{j}\end{cases}
\end{aligned}
$$
\end{comment}

For an event $\mathcal{A}$, let $\textbf{1}(\mathcal{A})$ be the indicator random variable.
%\begin{equation}
%   \begin{aligned}
%        \textbf{1}(\mathcal{A})=\begin{cases}
%            1 & \text{if $\mathcal{A}$ occurs}\\
%            0 & \text{otherwise}.
%        \end{cases}
%    \end{aligned}
%\end{equation}

\subsection{Achievability for exact graph matching}
We will first provide the proof overview of Theorem \ref{thm:matching achievability}, the achievability of exact graph matching, by generalizing the proof techniques of  \cite{RS21} for the two communities with $p=\Theta(q)=\Theta(\log n/n)$ to general $(p,q,k)$.
Consider the estimator that maximizes the alignment between the two graphs:
\begin{equation}\label{eq:achieve estimator}
    \hat{\pi} \in \argmax_{\pi \in S_{n}} \Sigma_{(i,j)\in \mathcal{E}} A_{i,j}B_{\pi(i),\pi(j)},
\end{equation}
where $S_{n}$ is the set of all permutations over $n$ vertices. This estimator $\hat{\pi}$ maximizes the number of common edges between $G_{1}$ and $G_{2}$. Let us define the lifted permutation.
\begin{definition}[Definition 2.1 in \cite{RS21}] \label{eq:lifted permutation}
Consider a permutation $\pi \in \mathcal{S}_n$ that operates on the vertices. We define a corresponding lifted permutation $\tau: \mathcal{E} \rightarrow \mathcal{E}$ on pairs of vertices, where $\tau((i, j)) := (\pi(i), \pi(j))$. To simplify the notation, we use $\tau=\ell(\pi)$ as shorthand. Consequently, we also denote $\tau_*=\ell(\pi_*)$ and $\hat{\tau}=\ell(\hat{\pi})$.

\end{definition}

\begin{definition}\label{def:balanced community}
    For $\epsilon>0$, define an event
    \begin{equation}
        \mathcal{F}_{\epsilon}:=\left\{\left(1-\frac{\epsilon}{2} \right)\frac{n}{k} \leq |V_{r}| \leq \left(1+\frac{\epsilon}{2} \right)\frac{n}{k} \text{ for } r\in[k] \right\}.
    \end{equation} 
\end{definition}
When the probability that each vertex belongs to each community is uniform, $\mathcal{F}_{\epsilon}$ holds with probability $1-o(1)$, where $k^2 \log k = o(n)$. 

Note that
\begin{equation}
    \begin{aligned}
      \mathbb{P}\left(\hat{\pi} \neq \pi_*\right)&=\mathbb{P}\left(\hat{\tau} \neq \tau_*\right)=\mathbb{E}\left[\mathbb{P}\left(\hat{\tau} \neq \tau_* \mid \sigma, \tau_*\right)\right] \\
      &\leq \mathbb{E}\left[\mathbb{P}\left(\hat{\tau} \neq \tau_* \mid \sigma, \tau_*\right) \mathbf{1}\left(\mathcal{F}_\epsilon\right)\right]+\mathbb{P}\left(\mathcal{F}_\epsilon^c\right).
    \end{aligned}
\end{equation}
%where $\mathcal{F}_{\epsilon}$ is the event of balanced communities, defined in Definition \ref{def:balanced community}.
To show $\mathbb{P}\left(\hat{\tau} \neq \tau_* \mid \sigma, \tau_*\right) \mathbf{1}\left(\mathcal{F}_\epsilon\right)$ is close to 0 for any wrong permutation $\hat{\tau} \neq \tau_{*}$, we will use the probability generating function (PGF)  as defined in \cite{RS21}:
\begin{equation}\label{eq:definition pgf}
    \Phi^\tau(\theta, \omega, \zeta):=\mathbb{E}\left[\theta^{X(\tau)} \omega^{Y^+(\tau)} \zeta^{Y^{-}(\tau)} \big| \sigma, \tau_*\right],
\end{equation}
where 
\begin{equation}
    \begin{aligned}
      X(\tau):&=\sum_{e \in \mathcal{E}} A_e B_{\tau_*(e)}-\sum_{e \in \mathcal{E}} A_e B_{\tau(e)}\\
     &=\sum_{e \in \mathcal{E}: \tau(e) \neq \tau_*(e)}\left(A_e B_{\tau_*(e)}-A_e B_{\tau(e)}\right) ,\\
        Y^{+}(\tau) & :=\sum_{e \in \mathcal{E}^{+}({\sigma}): \tau(e) \neq \tau_*(e)} A_e B_{\tau_*(e)}, \\
Y^{-}(\tau) & :=\sum_{e \in \mathcal{E}^{-}({\sigma}): \tau(e) \neq \tau_*(e)} A_e B_{\tau_*(e) } .
    \end{aligned}
\end{equation}
Note that $X(\tau_*)=0$. $X(\hat{\tau})$ must be less than or equal to 0 for $\hat{\tau}\neq \tau_*$ for the optimization \eqref{eq:achieve estimator} to output $\hat{\tau}= \tau_*$.
We will use this fact and  an upper bound for $\Phi^\tau(\theta, \omega, \zeta)  $ to bound  $\mathbb{P}\left(\hat{\tau} \neq \tau_* \mid \sigma, \tau_*\right) \mathbf{1}\left(\mathcal{F}_\epsilon\right)$. %The bound for $\Phi^\tau(\theta, \omega, \zeta)$ is as follows.

\begin{lemma}%[Extension of Lemma 2.3 in \cite{RS21} to general $(p,q)$ regime]
\label{lem:PGF}
Suppose that $\sigma$ and $\tau_*=\ell(\pi_*)$ are given. Fix $\pi \in \mathcal{S}_n$ and let $\tau=\ell(\pi)$. For any constants $\epsilon \in(0,1)$, $p,q=o(1)$, $\theta=(\sqrt{p}\vee\sqrt{q})$, and $1 \leq \omega, \zeta \leq 3$,  for sufficiently large $n$, we have that
\begin{equation}
    \Phi^\tau(\theta, \omega, \zeta) \leq \exp \left(-(1-\epsilon) s^2\left(p M^{+}(\tau)+q M^{-}(\tau)\right)\right),
\end{equation}
where
\begin{equation}
    \begin{aligned}
M^{+}(\tau) & :=\left|\left\{e \in \mathcal{E}^{+}({\sigma}): \tau(e) \neq \tau_*(e)\right\}\right|, \\
M^{-}(\tau) & :=\left|\left\{e \in \mathcal{E}^{-}({\sigma}): \tau(e) \neq \tau_*(e)\right\}\right|. 
\end{aligned}
\end{equation}
\end{lemma}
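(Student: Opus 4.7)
The plan is to extend the PGF analysis of \cite{RS21}---originally carried out for two equal-sized communities at density $p,q=\Theta(\log n/n)$---to the general $k$-community, $p,q=o(1)$ regime. The strategy is to factor $\Phi^{\tau}$ as a product over orbits of a ``displacement'' permutation and bound each orbit's contribution via the joint Bernoulli law of the correlated subsampling.

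First, I would change variables to $\tau' := \ell(\pi_*^{-1}\circ\pi)$ and introduce $B'_e := B_{\tau_*(e)}$; the relation $B_{i,j}=B'_{\pi_*(i),\pi_*(j)}$ gives $B_{\tau(e)} = B'_{\tau'(e)}$, so the statistics rewrite as
\begin{align*}
X(\tau) &= \sum_{e:\tau'(e)\neq e} A_e\bigl(B'_e - B'_{\tau'(e)}\bigr), \\
Y^\pm(\tau) &= \sum_{e\in\mathcal{E}^\pm(\sigma):\,\tau'(e)\neq e} A_e B'_e.
\end{align*}
The key structural observation is that $\{(A_e, B'_e)\}_{e\in\mathcal{E}}$ are mutually independent across $e$, each following the $2\times 2$ joint Bernoulli law with parameter $p$ (for $e\in\mathcal{E}^+(\sigma)$) or $q$ (for $e\in\mathcal{E}^-(\sigma)$). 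Decomposing $\{e:\tau'(e)\neq e\}$ into $\tau'$-orbits---each a cycle of length $\ell\geq 2$ on disjoint pairs---the PGF factorizes as $\Phi^\tau = \prod_O \Phi^\tau_O$.

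Next, for a single orbit $O=(e_1\to\cdots\to e_\ell\to e_1)$ with $\ell^\pm_O := |\{i : e_i\in\mathcal{E}^\pm(\sigma)\}|$ and indicators $\chi^\pm_i := \mathbf{1}[e_i\in\mathcal{E}^\pm(\sigma)]$, I would evaluate
\begin{align*}
\Phi^\tau_O = \mathbb{E}\Bigl[\prod_{i=1}^\ell \bigl(\theta\omega^{\chi^+_i}\zeta^{\chi^-_i}\bigr)^{A_{e_i}B'_{e_i}}\theta^{-A_{e_i}B'_{e_{i+1}}}\Bigr]
\end{align*}
by conditioning on $(B'_{e_i})_{i=1}^\ell$, under which the $A_{e_i}$'s become conditionally independent with $\mathbb{P}(A_{e_i}=1\mid B'_{e_i}=1)=s$ and $\mathbb{P}(A_{e_i}=1\mid B'_{e_i}=0)=s(1-s)p_i/(1-sp_i)$ for $p_i\in\{p,q\}$. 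After integrating out the $A$'s and then the $B'$'s, the ``own'' per-pair events $\{A_{e_i}=B'_{e_i}=1\}$ with probability $s^2 p_i$ produce a factor $\prod_i\bigl(1-s^2 p_i\bigl(1-\theta\omega^{\chi^+_i}\zeta^{\chi^-_i}\bigr)\bigr)$, which dominates with leading behavior $\exp\bigl(-(1-o(1))s^2(p\ell^+_O + q\ell^-_O)\bigr)$ since $\theta=\sqrt{p}\vee\sqrt{q}=o(1)$ and $\omega,\zeta\in[1,3]$. The ``cross'' factor $\theta^{-A_{e_i}B'_{e_{i+1}}}$ couples independent pairs with $\mathbb{P}(A_{e_i}B'_{e_{i+1}}=1)=s^2 p_i p_{i+1}$, which is of strictly lower order than the own term and only perturbs the bound multiplicatively within the $(1-\epsilon)$ slack, giving
\begin{align*}
\Phi^\tau_O \leq \exp\bigl(-(1-\epsilon)s^2(p\ell^+_O + q\ell^-_O)\bigr)
\end{align*}
for $n$ large enough. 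Multiplying over orbits and using $\sum_O \ell^\pm_O = M^\pm(\tau)$ then yields the claim.

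The main obstacle is making the per-orbit estimate rigorous uniformly in the orbit length $\ell$ (potentially as large as $\Theta(n^2)$) and in the arbitrary interleaving of $\mathcal{E}^+$ and $\mathcal{E}^-$ edges around the cycle. Because the cross factors at adjacent indices share the variable $B'_{e_{i+1}}$, the $B'$-expectation does not factorize over $i$; one must execute the computation in a transfer-matrix style along the cycle, verifying that the off-diagonal (cross) entries, of order $s^2(p\vee q)^2/\theta$, are strictly smaller than the diagonal (own) entries of order $s^2 p$ or $s^2 q$, so that the cumulative cross error across a length-$\ell$ cycle remains $o\bigl(s^2(p\ell^+_O + q\ell^-_O)\bigr)$ and is absorbed by $\epsilon$. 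Once this per-orbit inequality is established, the rest is bookkeeping.
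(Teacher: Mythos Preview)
Your proposal is correct and follows essentially the same route as the paper: decompose $\Phi^\tau$ over the cycles of $\tau_*^{-1}\circ\tau$, bound each cycle's PGF individually (the paper packages this as an extension of Lemma~2.10 in \cite{RS21}, with $\theta=\sqrt{p}\vee\sqrt{q}$ replacing the $\theta=1/\sqrt{n}$ used there so that $p\theta^{-1},q\theta^{-1}=o(1)$), and multiply over cycles using $\sum_C|C\cap\mathcal{E}^\pm(\sigma)|=M^\pm(\tau)$. Your transfer-matrix sketch for the per-cycle bound is exactly the content of that lemma's proof, and your identification of the cross-term order $s^2(p\vee q)^2/\theta=o(s^2(p\wedge q))$ is the reason the choice $\theta=\sqrt{p}\vee\sqrt{q}$ works.
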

Racz and Sridhar \cite{RS21} only considered the case $p,q=\Theta\left(\frac{\log n}{n}\right)$ with two communities. We extended the result and the corresponding lemma (Lemma 2.3 in \cite{RS21}) for more general $p$ and $q$ by choosing an appropriate value of $\theta$. As long as $p\theta^{-1}, q\theta^{-1}=o(1)$, Lemma \ref{lem:PGF} can be established, so we assume that $p, q=o(1)$ and $\theta=(\sqrt{p} \vee \sqrt{q})$.

Recall that we have the vertex sets of $k$ communities, $V_{1},\ldots,V_{k}$.
Define the set of all lifted permutations for which there are $x_{r}$ incorrectly matched vertices in the $r$-th community for $r\in[k]$:
\begin{equation}
    S_{x_1, \ldots x_k}:=\left\{\ell(\pi):\big|\left\{ i \in V_r: \pi(i) \neq \pi_*(i)\right\}\big|=x_r, r \in[k]\right\}.
\end{equation}
%To find the conditions for the output of  \eqref{eq:achieve estimator} to be $\hat{\pi}=\pi_{*}$, i.e., $\hat{\tau}= \tau_{*}$,
For $\hat{\tau}\neq \tau_*$,  we have
\begin{equation}\label{eq:bound}
\begin{aligned}
    &\mathbb{P}\left(\hat{\tau} \neq \tau_* \mid \sigma, \tau_*\right) \mathbf{1}\left(\mathcal{F}_\epsilon\right)\\
    &=\sum_{x_1+\ldots+x_k \geq 2}\P( \hat{\tau} \in S_{x_1,\ldots,x_k} \mid \sigma, \tau_*) \mathbf{1}\left(\mathcal{F}_\epsilon\right),
\end{aligned}
\end{equation}
where $x_1+\ldots+x_k\geq 2$ is from the fact that it cannot happen that only one node is mismatched.

Let $\tau \in S_{x_1,\ldots,x_k}$. For the output $\hat{\pi}$ of \eqref{eq:achieve estimator} and its lifted permutation $\hat{\tau}=\ell(\hat{\pi})$,  we have
\begin{equation}\label{eq:error}
    \begin{aligned}
\mathbb{P}\left(\hat{\tau}=\tau \mid \sigma, \tau_*\right) & \leq  \mathbb{P}\left(X(\tau) \leq 0 \mid \sigma, \tau_*\right)\\
&\stackrel{(a)}{=}\mathbb{P}\left(\theta^{X(\tau) } \geq 1 \mid \sigma, \tau_*\right) \\
& \stackrel{(b)}{\leq} \Phi^\tau(\theta, 1,1) \\
&\leq \exp \left(-(1-\epsilon) s^2\left(p M^{+}(\tau)+q M^{-}(\tau)\right)\right).
\end{aligned}
\end{equation}
The equality $(a)$ holds by $0<\theta <1$, the inequality $(b)$ holds by Markov's inequality and \eqref{eq:definition pgf}, and the last inequality holds from Lemma \ref{lem:PGF}. Therefore, if $M^{+}(\tau)$ and $M^{-}(\tau)$ are large enough, we can make $\mathbb{P}\left(\hat{\tau}=\tau \mid \sigma, \tau_*\right)$  small enough. Let us define two sets $ E_{\text {tr }}^{+}$ and $ E_{\text {tr }}^{-}$ as follows:
\begin{equation}
    \begin{aligned}
& E_{\text {tr }}^{+}:=\left\{(u, v) \in \mathcal{E}^{+}(\sigma): \pi(u)=\pi_*(v), \pi(v)=\pi_*(u)\right\}, \\
& E_{t r}^{-}:=\left\{(u, v) \in \mathcal{E}^{-}(\sigma): \pi(u)=\pi_*(v), \pi(v)=\pi_*(u)\right\} .
\end{aligned}
\end{equation}
We express $M^{+}(\tau)$ and $M^{-}(\tau)$ using $ E_{\text {tr }}^{+}$ and $ E_{\text {tr }}^{-}$, and the result can be represented by the following lemma:
\begin{lemma}%[Extension of Lemma 2.4 in \cite{RS21} to $k$ communities]
\label{lem:M}
    Suppose that $\sigma$ and $\pi_*$ are given. Fix $\pi \in \mathcal{S}_n$ and let $\tau=\ell(\pi)$. Consider $x_1, \ldots, x_k$ such that $\tau \in S_{x_1, \ldots, x_k}$. Then we have that
\begin{equation}
\begin{aligned}
& M^{+}(\tau)=\sum_{r=1}^k {x_r \choose 2}+x_r\left(\left|V_r\right|-x_r\right)-\left|E_{t r}^{+}\right| \\
& M^{-}(\tau)=\sum_{r=1}^k x_r\left(n-\left|V_r\right|\right)-\sum_{r<r^{\prime}} x_r x_{r^{\prime}}-\left|E_{t r}^{-}\right|.
\end{aligned}
\end{equation}
Furthermore, we have that $\left|E_{t r}^{+}\right|,\left|E_{t r}^{-}\right| \leq\left(x_1+\ldots+x_k\right) / 2$.
\end{lemma}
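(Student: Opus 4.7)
The plan is to decompose the counts $M^{+}(\tau)$ and $M^{-}(\tau)$ by first characterizing exactly when $\tau(e) \neq \tau_{*}(e)$, then doing a careful inclusion--exclusion over the ``disagreement'' vertex sets $T_{r} := \{i \in V_{r} : \pi(i) \neq \pi_{*}(i)\}$, which by hypothesis satisfy $|T_{r}| = x_{r}$.

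First I would establish the key local characterization: for $e = (i,j)$, we have $\tau(e) = \tau_{*}(e)$ iff $\{\pi(i), \pi(j)\} = \{\pi_{*}(i), \pi_{*}(j)\}$. A short case analysis gives three sub-cases. (i) If $\pi(i) = \pi_{*}(i)$ and $\pi(j) = \pi_{*}(j)$, then $\tau(e) = \tau_{*}(e)$. (ii) If exactly one of the endpoints is mismatched, say $\pi(i) \neq \pi_{*}(i)$ and $\pi(j) = \pi_{*}(j)$, then since $\pi$ is a bijection we cannot have $\pi(i) = \pi_{*}(j) = \pi(j)$, so $\tau(e) \neq \tau_{*}(e)$. (iii) If both endpoints are mismatched, then $\tau(e) = \tau_{*}(e)$ iff $\pi(i) = \pi_{*}(j)$ and $\pi(j) = \pi_{*}(i)$, i.e., $e$ is one of the ``transposition'' edges counted in $E_{tr}^{+}$ or $E_{tr}^{-}$. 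Consequently, the edges with $\tau(e) \neq \tau_{*}(e)$ are exactly the edges with at least one mismatched endpoint, minus the transposition edges.

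Second, I would do the counting, split by whether $e$ is intra- or inter-community. For $M^{+}(\tau)$, within community $r$ the number of pairs with at least one endpoint in $T_{r}$ is $\binom{x_{r}}{2}$ (both endpoints mismatched) plus $x_{r}(|V_{r}| - x_{r})$ (exactly one endpoint mismatched). Summing over $r$ and subtracting the intra-community transpositions $|E_{tr}^{+}|$ gives the first formula. For $M^{-}(\tau)$, I would count incidences: each $i \in T_{r}$ contributes $n - |V_{r}|$ inter-community edges to which it is incident, so $\sum_{r} x_{r}(n - |V_{r}|)$ counts each inter-community edge with one mismatched endpoint once and each such edge with two mismatched endpoints twice; subtracting the double-counted term $\sum_{r<r'} x_{r} x_{r'}$ and then the transpositions $|E_{tr}^{-}|$ yields the second formula.

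Finally, for the bound $|E_{tr}^{+}|, |E_{tr}^{-}| \leq (x_{1} + \ldots + x_{k})/2$, I would observe that in any transposition edge $(u,v)$ both endpoints are mismatched, and because $\pi_{*}$ is a bijection each mismatched vertex $u$ determines at most one candidate partner $v = \pi_{*}^{-1}(\pi(u))$. Thus the edges of $E_{tr}^{+} \cup E_{tr}^{-}$ form a (partial) matching on the set of mismatched vertices, which has size $x_{1} + \ldots + x_{k}$, giving the claimed bound. I do not anticipate a serious obstacle here; the only delicate point is the case analysis of step (iii) above and making sure the inclusion--exclusion for $M^{-}$ correctly accounts for the double-counting between distinct communities, which is essentially bookkeeping.
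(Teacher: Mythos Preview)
Your proposal is correct and follows exactly the approach the paper has in mind; the paper itself omits the proof, stating only that ``from the definitions of $M^{+}(\tau)$ and $M^{-}(\tau)$, the above lemma can be easily proved,'' and your case analysis plus inclusion--exclusion is precisely the routine verification this sentence alludes to. The matching argument for the bound on $|E_{tr}^{\pm}|$ is likewise standard and complete.
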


From the definitions of $M^{+}(\tau)$ and $M^{-}(\tau)$, the above lemma can be easily proved. By Lemma \ref{lem:M}, on the event $\mathcal{F}_{\epsilon}$, if $x_{r} \leq \frac{\epsilon}{2} |V_{r}|$ for all $r\in[k]$, we can show
\begin{equation}\label{eq:M easy}
    \begin{aligned}
        & M^{+}(\tau) \geq (1-\epsilon)\frac{n}{k}(x_{1}+\ldots,x_{k}),\\
        & M^{-}(\tau) \geq (1-\epsilon)\frac{(k-1)n}{k}(x_{1}+\ldots,x_{k}).
    \end{aligned}
\end{equation}
Plugging \eqref{eq:M easy} back into \eqref{eq:error} and taking a union bound over all $\tau\in S_{x_1,\ldots,x_k}$, we get 
\begin{equation}\label{eq:prob bound1}
		\mathbb{P}\left(\hat{\tau} \in S_{x_{1},\ldots,x_{k}} \mid \sigma, \tau_{*}\right) 1\left(\mathcal{F}_{\epsilon}\right) \leq n^{-\epsilon\left(x_{1}+\ldots+x_{k}\right)} .
\end{equation}
In general case where $x_{r} \geq \frac{\epsilon}{2} |V_{r}|$ for some $r\in[k]$, on the other hand, we have 
\begin{equation}\label{eq:M general}
    \begin{aligned}
        & M^{+}(\tau) \geq (1-\epsilon)\frac{n}{2k}(x_{1}+\ldots,x_{k}),\\
        & M^{-}(\tau) \geq (1-\epsilon)\frac{(k-1)n}{2k}(x_{1}+\ldots,x_{k}).
    \end{aligned}
\end{equation}
In this case,  $M^{+}(\tau)$ and  $M^{-}(\tau)$ are not large enough, and thus we need a different approach than \eqref{eq:error}. If $x_{r} \geq \frac{\epsilon}{2} |V_{i}|$ for some $r\in[k]$, we will show below lemma.
\begin{lemma}%[Extension of Lemma 2.9 in \citep{RS21} to general $(p,q,k)$]
\label{lem:prob bound 2}
    For $p,q = o(1)$, $s \in[0,1]$ and any $\epsilon \in (0,1)$ such that  $ nps^{2}, nqs^{2}=\omega(1)$ and  $ns^{2}(p+(k-1)q) / k>(1+\epsilon)(1-\epsilon)^{-2} \log n$, there exists a positive constant $C$ with the properties below.
    Given $\sigma$, let $x_{1},\ldots,x_{k}$ be such that $x_{r} \geq \frac{\epsilon}{2}\left|V_{r}\right|$ for some $r\in[k]$. For sufficiently large $n$, we have that
	\begin{equation}\label{eq:prob bound2}
		\mathbb{P}\left(\hat{\tau} \in S_{x_{1},\ldots, x_{k}} \mid \sigma, \tau_{*}\right) 	\mathbf{1}\left(\mathcal{F}_{\epsilon}\right) \leq n^{-\delta\left(x_{1}+\ldots+x_{k}\right)},
	\end{equation}
 where $\delta=\left(\epsilon/2 \wedge \frac{\epsilon^{2}(1-\epsilon)nps^{2}}{100 k \log n} \wedge \frac{\epsilon^{2}(k-1)nqs^{2}}{100 k \log n}\right)$ if $k \leq C(\sqrt{nps^{2}}\wedge nqs^{2})$.
\end{lemma}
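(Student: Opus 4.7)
The plan is to bound $\mathbb{P}(\hat\tau\in S_{x_1,\ldots,x_k}\mid\sigma,\tau_*)\mathbf{1}(\mathcal{F}_\epsilon)$ by combining the per-permutation PGF estimate of Lemma \ref{lem:PGF} with a careful union bound over $\tau\in S_{x_1,\ldots,x_k}$, generalizing the two-step strategy of \eqref{eq:error}--\eqref{eq:prob bound1} and of \cite{RS21} to the regime where some $x_r\geq\frac{\epsilon}{2}|V_r|$.

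For any fixed $\tau\in S_{x_1,\ldots,x_k}$, the derivation in \eqref{eq:error}---Markov's inequality at $\theta=\sqrt{p}\vee\sqrt{q}$ with $\omega=\zeta=1$, combined with Lemma \ref{lem:PGF}---yields the per-permutation bound $\mathbb{P}(\hat\tau=\tau\mid\sigma,\tau_*)\leq\exp(-(1-\epsilon)s^2(pM^+(\tau)+qM^-(\tau)))$. On $\mathcal{F}_\epsilon$, I would invoke the general-case lower bounds \eqref{eq:M general}, which follow from Lemma \ref{lem:M} via the elementary inequality $x_r(|V_r|-x_r)+\binom{x_r}{2}\geq x_r(|V_r|-1)/2$ valid for all $x_r\in[0,|V_r|]$. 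These yield $M^+(\tau)\geq(1-\epsilon)\frac{n}{2k}x$ and $M^-(\tau)\geq(1-\epsilon)\frac{(k-1)n}{2k}x$ with $x:=x_1+\ldots+x_k$, so combining with the density hypothesis $ns^2(p+(k-1)q)/k>(1+\epsilon)(1-\epsilon)^{-2}\log n$ gives the per-permutation bound $\exp(-\tfrac{(1-\epsilon)^2}{2k}(nps^2+(k-1)nqs^2)x)$.

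Next, I would estimate $|S_{x_1,\ldots,x_k}|=D_x\prod_r\binom{|V_r|}{x_r}$ and apply a union bound. Using $D_x\leq x!$, $\binom{|V_r|}{x_r}\leq(e|V_r|/x_r)^{x_r}$, $|V_r|\leq 2n/k$ on $\mathcal{F}_\epsilon$, and the entropy-type rearrangement $\sum_r x_r\log(x/x_r)\leq x\log k$, one obtains a refined bound of the form $|S_{x_1,\ldots,x_k}|\leq(2en)^x$ modulo lower-order Stirling factors. Combining with the per-permutation bound, the total log-exponent is at most
\[
-\tfrac{(1-\epsilon)^2}{2k}(nps^2+(k-1)nqs^2)\,x+x\log(2en)+O(\log x),
\]
and matching this against $-\delta x\log n$ with the constraint $k\leq C(\sqrt{nps^2}\wedge nqs^2)$---which ensures the per-vertex decay $(nps^2+(k-1)nqs^2)/(2k)$ dominates $\log n$ by a margin proportional to $nps^2/k$ and $(k-1)nqs^2/k$---determines $\delta$ to be the three-term minimum $\epsilon/2\wedge\frac{\epsilon^2(1-\epsilon)nps^2}{100k\log n}\wedge\frac{\epsilon^2(k-1)nqs^2}{100k\log n}$, with each term capturing one source of slack (the $\epsilon$-level density excess, and the intra- and inter-community excess margins, respectively).

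The main obstacle will be closing the gap between the per-permutation decay rate and the combinatorial cost of enumerating $S_{x_1,\ldots,x_k}$: the naive estimate $|S_{x_1,\ldots,x_k}|\leq n^x$ paired with the per-permutation bound $n^{-(1+\epsilon)x/2}$ yields only the non-decaying $n^{(1-\epsilon)x/2}$, so the argument fundamentally relies on both the refined enumeration exploiting the community structure (the $1/k$ factors implicit in $(e|V_r|/x_r)^{x_r}$ after substituting $|V_r|\leq 2n/k$) and on the extra density margin guaranteed by $k\leq C(\sqrt{nps^2}\wedge nqs^2)$. The constant $100$ in $\delta$ absorbs the Stirling slack, the entropy correction $k^x$, and the $(1-\epsilon)$ factor from Lemma \ref{lem:PGF}, while the factor $1/\log n$ in the latter two terms converts the raw per-vertex decay rate into the $n^{-\delta x}$ form.
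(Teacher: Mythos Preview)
Your approach has a genuine gap: the union bound does not close. You correctly observe that the naive estimate $|S_{x_1,\ldots,x_k}|\leq n^x$ paired with the per-permutation bound $n^{-(1+\epsilon)x/2}$ (coming from the half-sized $M^{\pm}$ bounds in \eqref{eq:M general}) gives the growing quantity $n^{(1-\epsilon)x/2}$. However, your two proposed fixes do not resolve this. Your own refined enumeration yields $|S_{x_1,\ldots,x_k}|\leq(2en)^x$, which is no better than $n^x$ up to an immaterial constant base---the $1/k$ factor from $|V_r|\leq 2n/k$ is exactly canceled by the entropy term $\sum_r x_r\log(x/x_r)\leq x\log k$, as you yourself note. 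And the constraint $k\leq C(\sqrt{nps^2}\wedge nqs^2)$ does not supply any extra density margin: take for instance $nps^2=nqs^2=2\log n$, so the density hypothesis holds for small $\epsilon$, yet $\frac{(1-\epsilon)^2}{2k}(nps^2+(k-1)nqs^2)=(1-\epsilon)^2\log n<\log n$, and the exponent of the union bound stays positive regardless of how small $k$ is. There is a genuine factor-of-two deficit in the exponent that no counting refinement can recover.

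The paper closes the gap by a different mechanism: it decomposes the event $\{X(\tau)\leq 0\}$ according to whether $Y^+(\tau)\geq z^+$ and $Y^-(\tau)\geq z^-$ with $z^\pm=(1-\epsilon)s^2\,\cdot\,(p\,\text{or}\,q)\,M^\pm(\tau)$. On the intersection, one applies the PGF \eqref{eq:definition pgf} with $\omega=\zeta=e$ (not $1$), picking up an extra $e^{-z^+-z^-}$ factor that \emph{doubles} the exponent and restores $\exp(-2(1-\epsilon)s^2(pM^++qM^-))$; with the half-sized $M^\pm$ from \eqref{eq:M general} this now beats the $n^x$ union bound and gives the $n^{-\epsilon x}$ term. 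The complementary events $\{Y^+(\tau)\leq z^+\}$ and $\{Y^-(\tau)\leq z^-\}$ are handled separately by a Chernoff-type lower-tail bound on $Y^\pm$ (following Lemma~2.9 of \cite{RS21}): since $Y^+(\tau)$ is a sum of $M^+(\tau)$ independent $\mathrm{Bern}(ps^2)$ variables, its lower tail decays like $\exp(-c\epsilon^2 ps^2 M^+(\tau))$, and it is precisely in absorbing the $n^x$ union bound for \emph{these} pieces that the hypothesis $k\leq C(\sqrt{nps^2}\wedge nqs^2)$ enters and produces the other two terms in $\delta$.
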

The proof for this lemma will be described in detail in Appendix \S\ref{app:A}.
By combining \eqref{eq:prob bound1} and \eqref{eq:prob bound2}, we have
\begin{equation}\label{eq:final bound}
   \mathbb{P}\left(\hat{\tau} \in S_{x_{1},\ldots,x_{k}} \mid \sigma, \tau_{*}\right) 1\left(\mathcal{F}_{\epsilon}\right)\leq n^{-\delta(x_{1}+\ldots+x_{k})},
\end{equation}
where $\delta=\left(\epsilon/2 \wedge \frac{\epsilon^{2}(1-\epsilon)nps^{2}}{100 k \log n} \wedge \frac{\epsilon^{2}(k-1)nqs^{2}}{100 k \log n}\right)$.
Plugging \eqref{eq:final bound} into \eqref{eq:bound} and taking union bound over $x_{1}+\ldots+x_{k} \geq 2$, the proof is complete. The detailed proof is in Appendix \S\ref{app:A}.

\subsection{Impossibility for exact graph matching}
 We will consider the maximum a posteriori (MAP) estimator to prove Theorem \ref{thm:matching impossible}. The MAP estimator is given by
\begin{equation}
\hat{\pi}_{\mathrm{MAP}} \in \underset{\pi \in \mathcal{S}_{n}}{\arg \max } \;\;\mathbb{P}\left(\pi_{*}=\pi \mid A, B, \sigma\right).
\end{equation}
To prove the impossibility of graph matching, many previous works, including \cite{RS21} and \cite{CSKM16}, have used the fact that the exact graph matching is impossible if there exists an isolated vertex. Thus, given the ground-truth community labels, we found the conditions for at least one community to have isolated vertices. These conditions are \eqref{eq:thm:matching impossible degree} and \eqref{eq:thm:matching impossible k}.

\subsection{Community recovery}
For the stochastic block model with $n=km$ vertices partitioned into $k$ equal-sized communities, in the regime  $p=\frac{\alpha \log m}{m}$, $q=\frac{\beta \log m}{m}$ for $\alpha>\beta>0$, Agarwal et al. \cite{ABKK17}  showed that the exact community recovery is possible if $\sqrt{\alpha}-\sqrt{\beta}>1$ for $k=o(\log n)$ and it is impossible with high probability if  $\sqrt{\alpha}-\sqrt{\beta}<1$, where $k=n^{o(1)}$. 
%As we mentioned in Section \ref{sec:prior work}, on the condition that $p=\frac{a \log m}{m}$, $q=\frac{b \log m}{m}$ for positive constant $a,b$ and each community size is the same as $m=\frac{n}{k}$, \cite{ABKK17} found precise information limit where $k=o(\log n)$. 
Substituting the same condition to \eqref{eq:thm matching k} gives $k \leq C\cdot s \log n$ and substituting it to \eqref{eq:thm matching degree} gives $s^2(\alpha+(k-1)\beta)>1$.

Thus, we can apply Theorem \ref{thm:matching achievability} under the condition \eqref{eq:cor:community recovery degree}, and show that the exact matching of vertices between $G_1$ and $G_2$ is achievable. Then, by combining $G_1$ and $G'_2$ (after permutation the vertices of $G_2$ by $\pi^{-1}_*$), we get $G_1\cup G'_2$ with intra-community edge density $p'=\frac{\alpha (1-(1-s)^2)\log m}{m}$ and inter-community edge density $q'=\frac{\beta (1-(1-s)^2)\log m}{m}$. Thus, by applying the result from  \cite{ABKK17}, we can prove Corollary \ref{cor:community recovery} and \ref{cor:community recovery impossible}. Similarly, Corollary \ref{cor:multiple achieve} and \ref{cor:multiple impossible} can also be proved.
%So we can prove as \citep{RS21}. The main idea is that by first performing exact matching, we can proceed with community recovery in $G_1 \cup G'_2$. $G_1 \cup G'_2$ has an advantage in that community recovery is easier because it has a higher edge density than a single graph ($G_1$ or $G'_2$).

%In both the cases of impossibility for community recovery and multiple correlated SBM, using the result from \cite{ABKK17} maintains consistency with the original proof \citep{RS21}. 
\section{Conclusion and Open Problems}

In this paper, we considered graph matching on correlated stochastic block models with applications for improving graph clustering. 
In particular, we extended the result of \cite{RS21} to the general $(p,q,k)$ regime. 
For general $p>q$ with $p,q=o(1)$ and $nps^2,nqs^2=\omega(1)$, we showed that the exact matching is possible if the average degree of vertices satisfies $ns^{2}\left(\frac{p+(k-1)q}{k}\right)>(1+\epsilon)\log n$ for any $\epsilon>0$ with an extra condition on $k$ in \eqref{eq:thm matching k}, and impossible if $ns^{2}\left(\frac{p+(k-1)q}{k}\right)<(1-\epsilon)\log n$ for $k=O(n^{t})$ for some $t>0$. By using this result, we also derived the regime of $(p,q,s)$ where having a correlated graph as side information can bring information advantage in recovering the community structure of the SBM graph. In particular, compared to the case of having a single SBM graph, where the exact community detection is achievable only if $\sqrt{\alpha}-\sqrt{\beta}>\sqrt{1/s}$ for the equal-sized communities of density $p=s\alpha\log m/m$ and $q=s\beta\log m/m$ with $\alpha>\beta>0$, the case of having two correlated SBMs can achieve the exact community detection if $\sqrt{\alpha}-\sqrt{\beta}>\sqrt{\frac{1}{1-(1-s)^{2}}}$ when the condition \eqref{eq:cor:community recovery degree} is satisfied. Our work has left several interesting open questions:

%\cite{RS21} only considered two communities and the case where the degree is logarithmic in the number of vertices. In this work, we expanded the result of \cite{RS21}.  First, we have found information theoretic limit for exact matching on correlated SBM with general edge probability $p,q$ and sampling probability $s$. Although the number of communities $k$ is limited by $p,q$ and $s$, it could increase as well as $n$ increase. Moreover, $p$ and $q$ do not need to be of the same order. That means, we could discover the limit not only when $p=\theta(q)$ but also when $p=o(q)$ or $q=\omega(q)$. Furthermore, when two or multiple correlated SBM are given and conditions for matching are also given, the area where recovery can be done more easier than when only one graph is given has also been expanded. Our work has left several open problems below.

\begin{itemize}
  %  \item \textbf{Reduce the gap between achievability and impossibility of exact matching} Looking at Theorem \ref{thm:matching achievability} and \ref{thm:matching impossible}, it can be seen that the condition on $k$ in the achievability region is stronger than the one in the impossible region, under the same conditions of $p, q$ and $s$. An open problem is to explore whether it is possible to increase $k$ to $n^c$ for some constant $c$ in the achievability.
    
    \item \textbf{Information-theoretic limit for community recovery in correlated SBMs:}  The achievability result for exact community recovery in Corollary \ref{cor:community recovery} is derived assuming that the exact matching of the two graphs is available through the condition in \eqref{eq:cor:community recovery degree} and Theorem \ref{thm:matching achievability}. However, as conjectured in Conjecture \ref{con:recovery}, the exact matching may not be necessary but partial matching can be sufficient for guaranteeing the exact community recovery with \eqref{eq:con:community recovery condition}.   
    Gaudio et al. \cite{GRS22} found the precise information-theoretic limit for community recovery on the correlated SBMs with two communities, and Conjecture \ref{con:recovery} aims to generalize this result for $k\geq2$. Proving this conjecture, or deriving the information-theoretic limit for community recovery in correlated SBMs with general $k$ is an interesting open direction.
%    We have made a conjecture for the limit of community recovery for larger $k$ based on this result, as shown in Conjecture \ref{con:recovery}. The first term in \eqref{eq:con condition} is likely related to partial recovery and the second term is likely related to community recovery.  Moreover, if either \ref{eq:con:community recovery condition} or \ref{eq:con condition} is not satisfied, community recovery may be impossible. 
    
    \item \textbf{Finding efficient algorithms for exact graph matching:}
    The achievability result for exact graph matching in Theorem \ref{thm:matching achievability} considers the estimator that searches over all the possible permutations to maximize the alignment between the two given graphs. However, the complexity of solving such an optimization scales as $\Theta(n!)$, which can be a prohibitive complexity with large $n$. Thus, the remaining question is how to design an efficient graph matching algorithm for the correlated SBMs. 
   %Our algorithm's time-complexity is extremely large at $O(n!)$, so our next goal is to find a more efficient algorithm. 
   In the correlated ER model, there have been many efficient matching algorithms including the seeded matching \citep{MX20,YG13}, noisy seeded matching \citep{YXL21}, and seedless graph matching algorithms \citep{BCL18+,FMWX19a,FMWX19b,MRT21a,MRT21b,MWXY22}. 
    The main question is how to extend and generalize such efficient algorithms to correlated SBMs. Recently, an efficient polynomial-time algorithm for exact matching of correlated SBMs was proposed in \citep{yang2023efficient}, under the assumption that the community structure is recovered in both graphs. It is still open though how to modify such an algorithm for the case of unknown community structure.

\end{itemize}

\bibliographystyle{IEEEtran}
\bibliography{arxivfull_rev2}

\appendices
\section{Proof of Theorem \ref{thm:matching achievability}}\label{app:A}

Recall the event 
    \begin{equation}
        \mathcal{F}_{\epsilon}:=\left\{\left(1-\frac{\epsilon}{2} \right)\frac{n}{k} \leq |V_{r}| \leq \left(1+\frac{\epsilon}{2} \right)\frac{n}{k} \text{ for } r\in[k] \right\}.
    \end{equation} 

\begin{lemma}\label{lem:balanced community}
Suppose that the community label of each vertex is uniformly distributed over $[k]$. Then, we have
    \begin{equation}
    \P\{ \mathcal{F}_{\epsilon} \}=1-o(1),
\end{equation}
    where $k^{2} \log k = o(n)$.
\end{lemma}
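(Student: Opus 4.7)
The plan is to exploit the independence of vertex labels: since each of the $n$ vertices independently picks its community uniformly from $[k]$, for every fixed $r \in [k]$ the size $|V_r|$ is a sum of $n$ i.i.d.\ Bernoulli$(1/k)$ indicators, so $|V_r| \sim \text{Binomial}(n, 1/k)$ with mean $n/k$. The event $\mathcal{F}_\epsilon^c$ is then a union of $k$ simple concentration failures, which I will control by a tail bound plus a union bound.

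Concretely, I would apply Hoeffding's inequality to each $|V_r|$ with deviation $t = \epsilon n/(2k)$, yielding
\begin{equation}
\P\pa{\absa{|V_r| - \tfrac{n}{k}} > \tfrac{\epsilon n}{2k}} \le 2 \exp\pa{-\tfrac{\epsilon^2 n}{2 k^2}}.
\end{equation}
A union bound over $r \in [k]$ then gives
\begin{equation}
\P(\mathcal{F}_\epsilon^c) \le 2k \exp\pa{-\tfrac{\epsilon^2 n}{2 k^2}} = 2 \exp\pa{\log k - \tfrac{\epsilon^2 n}{2 k^2}},
\end{equation}
and the hypothesis $k^2 \log k = o(n)$ is exactly what makes the exponent tend to $-\infty$, so $\P(\mathcal{F}_\epsilon) = 1 - o(1)$.

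There is no real obstacle; the only modest choice is which concentration inequality to use so that the assumption $k^2 \log k = o(n)$ is used cleanly. Hoeffding's absolute-deviation bound $\P(|X - \E X| > t) \le 2e^{-2 t^2 / n}$ for $X \sim \text{Binomial}(n, 1/k)$ matches it precisely; one could alternatively use the multiplicative Chernoff bound, which would in fact yield the conclusion under the weaker condition $k \log k = o(n)$ that is still implied by the stated hypothesis.
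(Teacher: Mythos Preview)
Your proposal is correct and follows essentially the same approach as the paper: apply Hoeffding's inequality to each $|V_r|\sim\mathrm{Binomial}(n,1/k)$ and then take a union bound over $r\in[k]$, using $k^2\log k=o(n)$ to kill the resulting $\exp(\log k - C n/k^2)$ term. The paper's argument is identical up to the explicit constant in the exponent.
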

\begin{proof}
     For any $r\in [k]$, by Hoeffding's inequality we have
\begin{equation}
    \P\left\{  \left(1-\frac{\epsilon}{2} \right)\frac{n}{k} \leq |V_{r}| \leq \left(1+\frac{\epsilon}{2} \right)\frac{n}{k} \right\} \geq 1-\exp \left(-C \frac{n}{k^{2}} \right),
\end{equation}
where $C$ is a constant depending on $\epsilon$. Taking a union bound over $[k]$, we get 
  \begin{equation*}
    \P\{ \mathcal{F}_{\epsilon} \}\geq 1-k \exp \left(-C \frac{n}{k^{2}} \right) = 1-\exp \left(\log k - C \frac{n}{k^2}\right).
\end{equation*}
Therefore, if $k^{2} \log k = o(n)$ then $\P\{ \mathcal{F}_{\epsilon} \} \rightarrow 1$. Thus, the proof is complete.
\end{proof}
We next address lemmas, which are extensions of Lemma 2.6-2.9 of \citep{RS21}, respectively, that will be used to prove Theorem \ref{thm:matching achievability}.
%The lemma below is an extension of Lemma 2.6 in \citep{RS21}.
\begin{lemma}%[Extension of Lemma 2.6 in \citep{RS21}]
\label{lem:M bound 1}
	Fix $\epsilon>0$.  Let $x_{r}$ be such that $x_{r} \leq \frac{\epsilon}{2}\left|V_{r}\right|$ for all $r\in [k]$. Suppose that event $\mathcal{F}_{\epsilon}$ holds. Given the ground truth permutation $\pi_*$, let $\tau$  be a lifted permutation such that $\tau \in S_{x_{1},\ldots,x_{k}}$. For sufficiently large $n$, we obtain that		
	\begin{align}\label{eq:M+ bound1}
		&M^{+}(\tau) \geq(1-\epsilon) \frac{n}{k}\left(x_{1}+\ldots+x_{k}\right) \\
		\label{eq:M- bound1}&M^{-}(\tau) \geq(1-\epsilon) \frac{(k-1)n}{k}\left(x_{1}+\ldots+x_{k}\right).
	\end{align}
\end{lemma}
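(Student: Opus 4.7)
The plan is to start from the exact identities in Lemma~\ref{lem:M}, bound each piece using the two-sided control $(1-\epsilon/2)n/k \leq |V_r| \leq (1+\epsilon/2)n/k$ provided by $\mathcal{F}_\epsilon$ together with the standing hypothesis $x_r \leq (\epsilon/2)|V_r|$, and then verify that the residual terms of size $\tfrac{1}{2}\sum_r x_r$ coming from $|E_{tr}^\pm|$ are absorbed by the slack built into the leading $(n/k)$-scale expressions. The running hypothesis $k \leq C(\sqrt{nps^2} \wedge nqs^2 \wedge n^c)$ with $p,q = o(1)$ forces $k = o(n)$, which is precisely what is needed for these absorption steps.

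For \eqref{eq:M+ bound1}, I would discard the nonnegative $\binom{x_r}{2}$ term in Lemma~\ref{lem:M}, leaving
\begin{equation*}
M^+(\tau) \geq \sum_{r=1}^{k} x_r\bigl(|V_r|-x_r\bigr) - |E_{tr}^+|.
\end{equation*}
The hypothesis $x_r \leq (\epsilon/2)|V_r|$ gives $|V_r|-x_r \geq (1-\epsilon/2)|V_r|$, and combining with $|V_r| \geq (1-\epsilon/2)n/k$ on $\mathcal{F}_\epsilon$ yields
\begin{equation*}
\sum_{r=1}^{k} x_r\bigl(|V_r|-x_r\bigr) \geq (1-\epsilon/2)^2\,\frac{n}{k}\sum_{r=1}^{k} x_r \geq \Bigl[(1-\epsilon) + \tfrac{\epsilon^2}{4}\Bigr]\frac{n}{k}\sum_{r=1}^{k} x_r.
\end{equation*}
Subtracting $|E_{tr}^+| \leq \tfrac{1}{2}\sum_r x_r$ still leaves the desired $(1-\epsilon)(n/k)\sum_r x_r$, because $(\epsilon^2/4)(n/k) \geq 1/2$ for all sufficiently large $n$.

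For \eqref{eq:M- bound1}, the identity $n-|V_r| = \sum_{r'\neq r}|V_{r'}|$ combined with the uniform lower bound on $|V_{r'}|$ yields $\sum_r x_r(n-|V_r|) \geq (1-\epsilon/2)(k-1)(n/k)\sum_r x_r$. For the cross term I would symmetrize: using $x_r \leq (\epsilon/2)|V_r| \leq (\epsilon/2)(1+\epsilon/2)n/k$ (and the mirror bound on $x_{r'}$) gives $2 x_r x_{r'} \leq (\epsilon/2)(1+\epsilon/2)(n/k)(x_r+x_{r'})$, whence $\sum_{r<r'} x_r x_{r'} \leq (\epsilon/4)(1+\epsilon/2)(k-1)(n/k)\sum_r x_r$. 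The net coefficient of $(k-1)(n/k)\sum_r x_r$ is then $(1-\epsilon/2) - (\epsilon/4)(1+\epsilon/2) = 1 - 3\epsilon/4 - \epsilon^2/8$, which exceeds $1-\epsilon$ by $\epsilon/4 - \epsilon^2/8 \geq \epsilon/8$ for $\epsilon \in (0,1)$. The resulting slack of order $\epsilon (k-1)(n/k)\sum_r x_r$ absorbs the remaining $|E_{tr}^-| \leq \tfrac{1}{2}\sum_r x_r$ for $n$ large. The computation is essentially algebraic bookkeeping; the one point of care is to retain the small slacks in $(1-\epsilon/2)^2$ and in $1 - 3\epsilon/4 - \epsilon^2/8$ rather than immediately relaxing to $1-\epsilon$, so that they remain available to dominate the linear $|E_{tr}^\pm|$ corrections.
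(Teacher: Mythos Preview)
Your proof is correct and, for $M^{+}(\tau)$, essentially identical to the paper's: both drop $\binom{x_r}{2}$, chain the inequalities $|V_r|-x_r \geq (1-\epsilon/2)|V_r| \geq (1-\epsilon/2)^2 n/k$, and absorb the $|E_{tr}^+|$ term into the $\epsilon^2/4$ slack.

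For $M^{-}(\tau)$ the arguments diverge in organization. You bound the three terms $\sum_r x_r(n-|V_r|)$, $\sum_{r<r'} x_r x_{r'}$, and $|E_{tr}^-|$ separately, which forces you to invoke the \emph{upper} bound $|V_r|\leq (1+\epsilon/2)n/k$ to control the cross term. The paper instead first merges all three terms into the single expression
\[
\sum_{r} x_r \sum_{r'\neq r}\Bigl(|V_{r'}| - \tfrac{x_{r'}}{2} - \tfrac{1}{2(k-1)}\Bigr),
\]
and then bounds the inner parenthesis by $(1-\epsilon/2)|V_{r'}|$ directly from $x_{r'}\leq(\epsilon/2)|V_{r'}|$; this uses only the \emph{lower} bound on $|V_{r'}|$. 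The paper's rewriting is a bit cleaner and yields the tighter coefficient $(1-\epsilon/2)^2$ rather than your $1-3\epsilon/4-\epsilon^2/8$, but both comfortably exceed $1-\epsilon$, so the difference is cosmetic.
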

\begin{proof}
First, we will show \eqref{eq:M+ bound1}. By Lemma \ref{lem:M}, we have
\begin{equation}
	\begin{aligned}
		M^{+}(\tau) &=  \sum^{k}_{r=1}{x_{r} \choose 2}+x_{r}\left(\left|V_{r}\right|-x_{r}\right)-\left|E_{t r}^{+}\right|\\
        & \stackrel{(a)}{\geq}\sum^{k}_{r=1}x_{r}\left(\left|V_{r}\right|-x_{r}\right)-\frac{x_{1}+\ldots+x_{k}}{2}\\
        &\stackrel{(b)}{\geq}\sum^{k}_{r=1}\left(1-\frac{\epsilon}{2}\right)x_{r}|V_{r}|-\frac{x_{1}+\ldots+x_{k}}{2} \\
		& \stackrel{(c)}{\geq}\left(\left(1-\frac{\epsilon}{2}\right)^{2} 	\frac{n}{k}-\frac{1}{2}\right)(x_{1}+\ldots+x_{k})\\
        &\stackrel{(d)}{\geq}(1-\epsilon) 	\frac{n}{k}\left(x_{1}+\ldots+x_{k}\right).
	\end{aligned}
\end{equation}
The inequality $(a)$ holds from ${x_{r} \choose 2} \geq 0$ and $\left|E_{t r}^{+}\right|\leq \left(x_{1}+\ldots+x_{k}\right) / 2$, the inequality $(b)$ holds by $x_{r} \leq \frac{\epsilon}{2}\left|V_{r}\right|$ for all $r \in [k]$, the inequality $(c)$ holds by $\left|V_{r}\right| \geq$ $(1-\epsilon / 2) n / k$ on the event $\mathcal{F}_{\epsilon}$, and the last inequality $(d)$ holds since $(1-\epsilon / 2)^{2}>1-\epsilon$. Similarly, we have
\begin{equation}
    \begin{aligned}
	M^{-}(\tau) &=\sum^{k}_{r=1}x_{r}\left(n-|V_{r}|\right)-\sum_{r<r'}x_{r}x_{r'}-\left|E_{t r}^{-}\right|\\
 & \stackrel{(a)}{\geq} \sum^{k}_{r=1}x_{r}\left(n-|V_{r}|\right)-\sum_{r<r'}x_{r}x_{r'}-\frac{x_{1}+\ldots+x_{k}}{2}\\
 &\stackrel{(b)}{=}\sum^{k}_{r=1}x_{r}\left(\sum_{r'\neq r}\left(V_{r'}-\frac{x_{r'}}{2}-\frac{1}{2(k-1)}\right)\right)\\
	& \stackrel{(c)}{\geq} \sum^{k}_{r=1}x_{r}\left(\sum_{r'\neq r}\left(1-\frac{\epsilon}{2}\right)V_{r'}\right)\\
 &\stackrel{(d)}{\geq}\left(1-\frac{\epsilon}{2}\right)^{2} \frac{(k-1)n}{k}\left(x_{1}+\ldots+x_{k}\right) \\
	&\geq(1-\epsilon) \frac{(k-1)n}{k}\left(x_{1}+\ldots+x_{k}\right).
\end{aligned}
\end{equation}
The inequality $(a)$ holds from $\left|E_{t r}^{-}\right| \leq \left(x_{1}+\ldots+x_{k}\right) / 2$, the equality $(b)$ holds by $\sum^k_{i=1}|V_i| = n$, the inequality $(c)$ holds since $x_{r'}+1 \leq \epsilon\left|V_{r'}\right|$ for all $r' \in [k]$, the inequality $(d)$ holds since $\left|V_{r}\right| \geq(1-\epsilon / 2) n / k$ for all $r \in [k]$ on the event $F_{\epsilon}$ and the last inequality holds since $(1-\epsilon / 2)^{2}>1-\epsilon$. Thus, the proof is complete.
\end{proof}

We will use Lemma \ref{lem:M bound 1} to prove Lemma \ref{lem:prob bound 1}.

\begin{lemma}%[Extension of Lemma 2.7 in \citep{RS21}]
\label{lem:prob bound 1}
	Fix $p,q=o(1)$, $ s \in[0,1]$ and $\epsilon \in(0,1)$ such that $ns^{2}(p+(k-1)q) / k>(1+\epsilon)(1-\epsilon)^{-2}\log n$. Given $\sigma$, let $x_{1},\ldots,x_{k}$ be such that $x_{r} \leq \frac{\epsilon}{2}\left|V_{r}\right|$ for all $r\in[k]$. For sufficiently large $n$, we obtain that
	\begin{equation}
		\mathbb{P}\left(\hat{\tau} \in S_{x_{1},\ldots,x_{k}} \mid \sigma, \tau_{*}\right) 1\left(\mathcal{F}_{\epsilon}\right) \leq n^{-\epsilon\left(x_{1}+\ldots+x_{k}\right)} .
	\end{equation}
\end{lemma}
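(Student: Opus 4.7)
The plan is to combine three ingredients: the single-permutation tail bound from \eqref{eq:error}, the lower bounds on $M^{+}(\tau)$ and $M^{-}(\tau)$ supplied by Lemma \ref{lem:M bound 1}, and a crude counting of $|S_{x_{1}, \ldots, x_{k}}|$, closed off by a union bound. The heavy lifting (the PGF estimate in Lemma \ref{lem:PGF} and the mismatched-edge counts in Lemma \ref{lem:M}) is already done in the earlier lemmas, so the argument is essentially bookkeeping.

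First I would fix an arbitrary $\tau \in S_{x_{1}, \ldots, x_{k}}$ and invoke \eqref{eq:error} (which rests on Lemma \ref{lem:PGF} applied at $\omega = \zeta = 1$) to get
\[
\mathbb{P}\left(\hat{\tau} = \tau \mid \sigma, \tau_{*}\right) \leq \exp\left(-(1-\epsilon)\, s^{2}\left(p M^{+}(\tau) + q M^{-}(\tau)\right)\right).
\]
Because the hypothesis $x_{r} \leq (\epsilon/2)|V_{r}|$ holds for every $r$, Lemma \ref{lem:M bound 1} applies on the event $\mathcal{F}_{\epsilon}$ and yields
\[
p M^{+}(\tau) + q M^{-}(\tau) \geq (1-\epsilon)\,\frac{n(p + (k-1)q)}{k}\, X, \qquad X := x_{1} + \cdots + x_{k}.
\]
Plugging in the density hypothesis $ns^{2}(p + (k-1)q)/k > (1+\epsilon)(1-\epsilon)^{-2}\log n$ then collapses the exponent to $(1+\epsilon) X \log n$, so
\[
\mathbb{P}\left(\hat{\tau} = \tau \mid \sigma, \tau_{*}\right) \mathbf{1}(\mathcal{F}_{\epsilon}) \leq n^{-(1+\epsilon) X}.
\]

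Next I would bound $|S_{x_{1}, \ldots, x_{k}}|$ by throwing away the per-community refinement and counting all $\pi \in S_{n}$ that disagree with $\pi_{*}$ at exactly $X$ coordinates. Any such $\pi$ is determined by a choice of mismatched set $M \subseteq [n]$ with $|M| = X$, together with a fixed-point-free bijection from $M$ to $\pi_{*}(M)$ relative to $\pi_{*}|_{M}$, which gives $|S_{x_{1}, \ldots, x_{k}}| \leq \binom{n}{X} D_{X} \leq n!/(n-X)! \leq n^{X}$. A union bound over $S_{x_{1}, \ldots, x_{k}}$ then finishes:
\[
\mathbb{P}\left(\hat{\tau} \in S_{x_{1}, \ldots, x_{k}} \mid \sigma, \tau_{*}\right) \mathbf{1}(\mathcal{F}_{\epsilon}) \leq n^{X} \cdot n^{-(1+\epsilon) X} = n^{-\epsilon X}.
\]

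I do not expect a genuine obstacle: the only delicate point is tracking the two independent losses of a factor $(1-\epsilon)$, one coming from Lemma \ref{lem:PGF} and one from Lemma \ref{lem:M bound 1}, whose product is exactly what forces the $(1-\epsilon)^{-2}$ in the density hypothesis. The crude enumeration $|S_{x_{1}, \ldots, x_{k}}| \leq n^{X}$ suffices because the per-permutation estimate already wins by a factor of $n^{\epsilon}$ per mismatched vertex, leaving no room where a more refined count (for instance, using the per-community refinement or the derangement gain $D_{X} \leq X!/e$) would be needed.
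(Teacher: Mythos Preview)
Your proposal is correct and essentially identical to the paper's proof: both invoke \eqref{eq:error} for the single-$\tau$ bound, apply Lemma~\ref{lem:M bound 1} on $\mathcal{F}_{\epsilon}$, bound $|S_{x_{1},\ldots,x_{k}}|\leq n^{X}$ via the same crude enumeration, and close with a union bound. The only cosmetic differences are that you invoke the derangement count $D_{X}$ (whereas the paper simply uses $X!$) and you fold in Lemma~\ref{lem:M bound 1} before the union bound rather than after; neither changes the argument.
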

\begin{proof}
    Let $\theta=\sqrt{p} \vee \sqrt{q}$ and $\tau \in S_{x_{1},\ldots,x_{k}}$. By \eqref{eq:error}, we have
$$
\begin{aligned}
	\mathbb{P}\left(\hat{\tau}=\tau | \sigma, \tau_* \right)\leq \exp \left(-(1-\epsilon) s^{2}\left(p M^{+}(\tau)+q M^{-}(\tau)\right) \right).
\end{aligned}
$$
Note that we can bound $|S_{x_1,\ldots,x_k}|$ as below.
\begin{equation}\label{eq:S bound}
\begin{aligned}
    \left|S_{x_{1},\ldots,x_{k}}\right| &\stackrel{(a)}{\leq} {n \choose x_{1}+\ldots+x_{k}}  \left(x_{1}+\ldots+x_{k}\right)!\\
     &=\frac{n !}{\left(n-x_{1}-\ldots-x_{k}\right) !} \leq n^{x_{1}+\ldots+x_{k}} .
\end{aligned}
\end{equation}
The inequality $(a)$ holds since the first term is the number of ways to choose mismatched vertices, and the second term is the number of ways to permute these.
Taking union bound, we can obtain
\begin{equation}\label{eq:exp bound1}
\begin{aligned}
	\mathbb{P}\left(\hat{\tau} \in S_{x_{1},\ldots,x_{k}} \mid \sigma, \tau_{*}\right) & \leq\left|S_{x_{1},\ldots,x_{k}}\right| \max _{\tau } \mathbb{P}\left(\hat{\tau}=\tau \mid \sigma, \tau_{*}\right) \\
	& \leq \max _{\tau \in S_{x_1,\ldots,x_k}}\left\{ \exp \left(\left(x_{1}+\ldots+x_{k}\right) \log n \right.\right.\\
 &\left.\left.\;\;-s^{2}(1-\epsilon) \left(p M^{+}(\tau)+q M^{-}(\tau)\right) \right)\right\} .
\end{aligned}
\end{equation}

On the event $\mathcal{F}_{\epsilon}$, by Lemma \ref{lem:M bound 1} and assumption that $ns^{2}\frac{(p+(k-1)q)}{k }>(1+\epsilon)(1-\epsilon)^{-2}\log n $, we have
\begin{equation*}
\begin{aligned}
&\left(x_{1}+\ldots+x_{k}\right) \log n-(1-\epsilon) s^{2}\left(p M^{+}(\tau)+q M^{-}(\tau)\right)\\
 &\leq   	\left\{1-(1-\epsilon)^{2} ns^{2}\frac{(p+(k-1)q)}{k \log n}  \right\}\left(x_{1}+\ldots+x_{k}\right) \log n \\
     &\leq-\epsilon\left(x_{1}+\ldots+x_{k}\right) \log n,
\end{aligned}
\end{equation*}
and 
$$
\exp \left(-\epsilon\left(x_{1}+\ldots+x_{k}\right) \log n\right)=n^{-\epsilon\left(x_{1}+\ldots+x_{k}\right)}.
$$
Thus, the proof is complete.
\end{proof}

\begin{lemma}%[Extension of Lemma 2.8 in \citep{RS21}]
\label{lem:M bound 2}
Fix $\epsilon>0$. Suppose that event $\mathcal{F}_{\epsilon}$ holds. Given the ground truth permutation $\pi_{*}$, let $x_{1},\ldots,x_{k}$  be such that $\tau \in S_{x_{1},\ldots,x_{k}}$. For sufficiently large $n$, we obtain that
\begin{align}\label{eq:M+ bound2}
	&M^{+}(\tau) \geq(1-\epsilon) \frac{n}{2k}\left(x_{1}+\ldots+x_{k}\right), \\
	\label{eq:M- bound2}&M^{-}(\tau) \geq(1-\epsilon) \frac{(k-1)n}{2k}\left(x_{1}+\ldots+x_{k}\right).
\end{align}

\end{lemma}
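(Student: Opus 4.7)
The plan is to work directly from Lemma \ref{lem:M}, but, unlike the proof of Lemma \ref{lem:M bound 1} where the assumption $x_r \le (\epsilon/2)|V_r|$ allowed us to drop the $\binom{x_r}{2}$ term and still bound $x_r(|V_r|-x_r)$ by roughly $x_r |V_r|$, here $x_r$ may be as large as $|V_r|$, so the term $x_r(|V_r|-x_r)$ can be essentially $0$. The key observation is that the two terms $\binom{x_r}{2}$ and $x_r(|V_r|-x_r)$ in the formula for $M^{+}(\tau)$ can be combined and, likewise, the cross term $\sum_{r<r'} x_r x_{r'}$ in $M^{-}(\tau)$ can be paired with $\sum_r x_r(n-|V_r|)$, so that the loss incurred by $x_r$ approaching $|V_r|$ is only a factor of $2$.

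For \eqref{eq:M+ bound2}, I would first rewrite
\begin{equation*}
\binom{x_r}{2} + x_r(|V_r|-x_r) \;=\; x_r\!\left(|V_r| - \tfrac{x_r+1}{2}\right).
\end{equation*}
Since $x_r \le |V_r|$, the factor in parentheses is at least $(|V_r|-1)/2$, which, under $\mathcal{F}_\epsilon$ and for $n/k$ large, is at least $(1-\epsilon/2)\,n/(2k) - O(1)$. Summing over $r$, using $|E^{+}_{tr}| \le (x_1+\dots+x_k)/2$ from Lemma \ref{lem:M}, and absorbing the $-\tfrac12(x_1+\dots+x_k)$ term into the leading $(1-\epsilon/2)n/(2k)$ coefficient by taking $n$ sufficiently large (so that $\epsilon n/(4k) \ge 1$, say), one arrives at the desired $(1-\epsilon)\tfrac{n}{2k}(x_1+\dots+x_k)$ bound.

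For \eqref{eq:M- bound2}, the idea is the same but applied to the pairwise sum. Using $\sum_r x_r(n-|V_r|) = \sum_r x_r \sum_{r'\ne r} |V_{r'}|$ and $\sum_{r<r'} x_r x_{r'} = \tfrac12 \sum_r x_r \sum_{r'\ne r} x_{r'}$, I would combine these into
\begin{equation*}
\sum_{r=1}^{k} x_r\!\sum_{r'\ne r}\!\left(|V_{r'}| - \tfrac{x_{r'}}{2}\right) - |E^{-}_{tr}|.
\end{equation*}
Again because $x_{r'} \le |V_{r'}|$, each bracketed factor is at least $|V_{r'}|/2 \ge (1-\epsilon/2)n/(2k)$ on $\mathcal{F}_\epsilon$. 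Summing over the $k-1$ values of $r'\ne r$ and subtracting $|E^{-}_{tr}| \le (x_1+\dots+x_k)/2$ gives $(1-\epsilon/2)\tfrac{(k-1)n}{2k}(x_1+\dots+x_k) - \tfrac12(x_1+\dots+x_k)$, and the additive correction is again absorbed into $(1-\epsilon)$ for large $n$.

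The main (and only) subtlety will be the algebraic rearrangement of the two-term expressions so that the weakest-case bound $|V_r|-(x_r+1)/2 \ge (|V_r|-1)/2$ becomes available; this is the step that replaces the stronger inequality $|V_r|-x_r \ge (1-\epsilon/2)|V_r|$ used in Lemma \ref{lem:M bound 1}, and it is precisely what introduces the factor of $1/2$ compared with \eqref{eq:M+ bound1}--\eqref{eq:M- bound1}. Beyond that, the argument is entirely parallel to Lemma \ref{lem:M bound 1}: plug in the $\mathcal{F}_\epsilon$ lower bounds on $|V_r|$, use the Lemma \ref{lem:M} bounds on $|E^{+}_{tr}|, |E^{-}_{tr}|$, and take $n$ large enough to hide lower-order $O(1)$ contributions.
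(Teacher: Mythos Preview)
Your proposal is correct and is essentially the paper's proof: the paper also combines $\binom{x_r}{2}+x_r(|V_r|-x_r)$ into $x_r\bigl(|V_r|-\tfrac{x_r+1}{2}\bigr)$ (it folds in the $|E_{tr}^{+}|$ bound first to get $x_r(|V_r|-\tfrac{x_r+2}{2})$), applies $x_r\le |V_r|$ to extract the factor $\tfrac12(|V_r|-2)$, and then invokes $\mathcal{F}_\epsilon$; for $M^{-}(\tau)$ it simply says ``similarly,'' which matches your pairwise rearrangement exactly.
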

\begin{proof}
     Similar to the proof of Lemma \ref{lem:M bound 1}, we have
$$
\begin{aligned}
	M^{+}(\tau) & \geq \sum^{k}_{r=1}{x_{r} \choose 2}+x_{r}\left(\left|V_{r}\right|-x_{r}\right)-\frac{x_{1}+\ldots+x_{k}}{2}\\
	&=\sum^{k}_{r=1}x_{r}\left(|V_{r}|-\frac{x_{r}+2}{2}\right) \\
	& \stackrel{(a)}{\geq} \sum^{k}_{r=1}\frac{1}{2}x_{r}(|V_{r}|-2) \geq (1-\epsilon) \frac{n}{2k}\left(x_{1}+\ldots+x_{r}\right).
\end{aligned}
$$
The inequality $(a)$ holds since $x_{r} \leq\left|V_{r}\right|$ for all $r\in[k]$ and the last inequality holds since $\left|V_{r}\right|-2 \geq(1-\epsilon / 2)\left|V_{r}\right|$ as well as $\left|V_{r}\right| \geq(1-\epsilon / 2) n / k$ for all $r\in[k]$. Similarly, \eqref{eq:M- bound2} can also be proven.
\end{proof}

We will use Lemma \ref{lem:M bound 2} to prove Lemma \ref{lem:prob bound 2}.

\begin{proof}[Proof of Lemma \ref{lem:prob bound 2}]
For notational simplicity, we will use $X,Y^+,Y^-$ instead of $X(\tau),Y^+(\tau),Y^-(\tau)$.  

For any $z^{+}$and $z^{-}$, we obtain that
\begin{align}
	&\mathbb{P}\left(\hat{\tau} \in S_{x_{1},\ldots,x_{k}} \mid \sigma, \tau_{*}\right) \leq  \mathbb{P}\left(\exists \tau \in S_{x_{1},\ldots,x_{k}}: X \leq 0 \mid \sigma, \tau_{*}\right) \\
	&\label{eq:case1} \leq \mathbb{P}\left(\exists \tau \in S_{x_{1},\ldots,x_{k}}: X \leq 0, Y^{+}\geq z^{+}, Y^{-} \geq z^{-} \mid \sigma, \tau_{*}\right) \\
	&\label{eq:case2}\quad+\mathbb{P}\left(\exists \tau \in S_{x_{1},\ldots,x_{k}}: Y^{+} \leq z^{+} \mid \sigma, \tau_{*}\right) \\
	&\label{eq:case3}\quad+\mathbb{P}\left(\exists \tau \in S_{x_{1},\ldots,x_{k}}: Y^{-} \leq z^{-} \mid \sigma, \tau_{*}\right) .
\end{align}
First, we will show that \eqref{eq:case1} is bounded by $n^{-\epsilon\left(x_{1}+\ldots+x_{k}\right)}$. We have 
\begin{equation}\label{eq:bound first}
    \begin{aligned}
&\mathbb{P}(X  \left.\leq 0, Y^{+} \geq z^{+}, Y^{-} \geq z^{-} \mid \sigma, \tau_*\right) \\
& =\sum_{k \leq 0} \sum_{k^{+} \geq z^{+}} \sum_{k^{-} \geq z^{-}} \mathbb{P}\left(\left(X, Y^{+}, Y^{-}\right)=\left(k, k^{+}, k^{-}\right) \mid \sigma, \tau_*\right) \\
& \stackrel{(a)}{\leq} \sum_{k=-\infty}^{\infty} \sum_{k^{+}=-\infty}^{\infty} \sum_{k^{-}=-\infty}^{\infty} \theta^k \omega^{k^{+}-z^{+}} \zeta^{k^{-}-z^{-}} \\
& \qquad  \qquad \qquad \quad \times \mathbb{P}\left(\left(X, Y^{+}, Y^{-}\right)=\left(k, k^{+}, k^{-}\right) \mid \sigma, \tau_*\right) \\
& =\omega^{-z^{+}} \zeta^{-z^{-}} \Phi^\tau(\theta, \omega, \zeta) .
\end{aligned}
\end{equation}

The inequality $(a)$ holds for $0<\theta<1$ and $\omega,\zeta  \geq 1$. 
Let $\theta = \sqrt{p} \vee \sqrt{q}$, $\omega=e$, $\zeta  =e $ and 
\begin{equation*}
	z^{+}:=(1-\epsilon) s^{2} p M^{+}(\tau)  \text { and }  z^{-}:=(1-\epsilon) s^{2} q M^{-}(\tau).
\end{equation*}
By Lemma \ref{lem:M bound 2}, Lemma \ref{lem:PGF}, \eqref{eq:bound first} and  $|S_{x_1,\ldots,x_k}| \leq n^{x_1+\ldots+x_k}$ (result from \eqref{eq:S bound}), \eqref{eq:case1} is bounded by
$$
\begin{aligned}
&\max _{\tau } \exp \left(\sum^k_{i=1}x_i \log n-2(1-\epsilon) s^2\left(p M^{+}(\tau)+qM^{-}(\tau)\right)\right)\\
&\leq \exp \left( \left(1-(1-\epsilon)^2 ns^2 \frac{(p+(k-q))}{ \log n}\right) \sum^k_{i=1}x_i  \log n \right)\\
&\leq \exp\left(-\epsilon\left(x_1+\ldots+x_k\right) \log n \right)=n^{-\epsilon\left(x_{1}+\ldots+x_{k}\right)}.
\end{aligned}
$$
The last inequality holds by assumption $ns^{2}(p+(k-1)q) / k>(1+\epsilon)(1-\epsilon)^{-2} \log n$. Thus, \eqref{eq:case1} is bounded by $n^{-\epsilon\left(x_{1}+\ldots+x_{k}\right)}$.

By using the similar way as the proof of Lemma 2.9 in \citep{RS21}, we can show that \eqref{eq:case2} is bounded by $\exp \left(-\frac{\epsilon^{2}(1-\epsilon)}{50k}\left(x_{1}+\ldots+x_{k}\right)  np s^{2}\right)$, if  $k\leq C\sqrt{nps^{2}}$ for a sufficiently small $C$. Similarly, we can show that \eqref{eq:case3} is bounded by $\exp \left(-\frac{(k-1)\epsilon^{2}(1-\epsilon)}{50k}\left(x_{1}+\ldots+x_{k}\right) nqs^{2}\right)$, if $k\leq Cnqs^{2}$.  Let 
$$
\delta_{0}:=\epsilon \wedge \frac{\epsilon^{2}(1-\epsilon)nps^{2}}{50 k \log n} \wedge \frac{\epsilon^{2}(k-1)nqs^{2}}{50 k \log n}.
$$
Then, the three terms \eqref{eq:case1}, \eqref{eq:case2}), and \eqref{eq:case3} are bounded by $n^{-\delta_{0}\left(x_{1}+\ldots+x_{k}\right)}$. Finally, we have
\begin{equation*}
    3 n^{-\delta_{0}\left(x_{1}+\ldots+x_{k}\right)} \leq n^{-\delta_{0}\left(x_{1}+\ldots+x_{k}\right) / 2},
\end{equation*}
 where $nps^2,nqs^2 = \omega(1)$. Thus, the proof is complete.
\end{proof}

\subsection{Proof of Theorem \ref{thm:matching achievability}}
   Note that $ns^{2}(p+(k-1)q) / k> (1+\varepsilon) \log n$ holds from the assumption \eqref{eq:thm matching degree}. We can choose a small enough $\epsilon$  such that it satisfies $1+\varepsilon > (1+\epsilon)(1-\epsilon)^{-2}$.
Let $\delta=\epsilon/2 \wedge \frac{\epsilon^{2}(1-\epsilon)nps^{2}}{100 k \log n} \wedge \frac{\epsilon^{2}(k-1)nqs^{2}}{100 k \log n}$ given by Lemma \ref{lem:prob bound 2}. Thus, $\delta \leq \epsilon$. Note that
\begin{equation}
\begin{aligned}
 \mathbb{P}\left(\hat{\pi} \neq \pi_{*}\right)&=\mathbb{P}\left(\hat{\tau} \neq \tau_{*}\right)=\mathbb{E}\left[\mathbb{P}\left(\hat{\tau} \neq \tau_{*} \mid \sigma, \tau_{*}\right)\right] \\
 &\leq \mathbb{E}\left[\mathbb{P}\left(\hat{\tau} \neq \tau_{*} \mid \sigma, \tau_{*}\right) 1\left(\mathcal{F}_{\epsilon}\right)\right]+\mathbb{P}\left(\mathcal{F}_{\epsilon}^{c}\right) .   
\end{aligned}
\end{equation}
By Lemma \ref{lem:balanced community}, we have that $\P(\mathcal{F}_{\epsilon}) \rightarrow 0$, where $k^{2}\log k =o(n)$. Thus we will choose $k<n^{1/3}$. Then, it remains to show $\mathbb{E}\left[\mathbb{P}\left(\hat{\tau} \neq \tau_{*} \mid \sigma, \tau_{*}\right) 1\left(\mathcal{F}_{\epsilon}\right)\right] \rightarrow 0$ as $n \to \infty$.

It cannot happen that only one node is mismatched. Thus, if $\hat{\tau}\in S_{x_1,\ldots,x_k}$ for some $x_1,\ldots,x_k$ and $\hat{\tau}\neq \tau_*$, then $x_1+\ldots+x_k \geq 2 $ must hold. By Lemma \ref{lem:prob bound 1} and \ref{lem:prob bound 2}, we can get
\begin{equation}
\begin{aligned}
    &\mathbb{P}\left(\hat{\tau} \neq \tau_{*} \mid \sigma, \tau_{*}\right) 1\left(\mathcal{F}_{\epsilon}\right)\\
    &=\sum_{x_{1}+\ldots+x_{k} \geq 2} \mathbb{P}\left(\hat{\tau} \in S_{x_{1},\ldots,x_{k}} \mid \sigma, \tau_{*}\right) 1\left(\mathcal{F}_{\epsilon}\right)\\
    &\leq \sum_{x_{1}+\ldots+x_{k} \geq 2} n^{-\delta\left(x_{1}+\ldots+x_{k}\right)},
\end{aligned}
\end{equation}
where $k \leq C(\sqrt{nps^{2}}\wedge nqs^{2})$ for a small enough constant $C$.
Note that there are  $k+\ell-1 \choose \ell$ different pairs of $\left(x_{1},\ldots,x_{k}\right)$ such that $x_{1}+\ldots+x_{k}=\ell$. Hence, we obtain
\begin{equation}
    \begin{aligned}
    \sum_{x_{1}+\ldots+x_{k} \geq 2} n^{-\delta\left(x_{1}+\ldots+x_{k}\right)} &\leq \sum_{\ell=2}^{\infty} {\ k+\ell-1 \choose \ell} n^{-\delta \ell}\\
    &=n^{-2 \delta} \sum_{\ell=0}^{\infty}{\ k+\ell+1 \choose \ell+2} n^{-\delta \ell}.
\end{aligned}
\end{equation}
We have 
\begin{equation}\label{eq:ratio}
    \frac{{\ k+\ell+2 \choose \ell+3} n^{-\delta (\ell+1)}}{{\ k+\ell+1 \choose \ell+2} n^{-\delta \ell}} = \frac{k+\ell+2}{\ell+3}n^{-\delta} \leq \frac{k+2}{3}n^{-\delta}\rightarrow 0
\end{equation}
if $k=o (n^{\delta})$. By using the sum of a geometric sequence, 
\begin{equation}
    \sum_{x_{1}+\ldots+x_{k} \geq 2} n^{-\delta\left(x_{1}+\ldots+x_{k}\right)} \leq n^{-2\delta} 2(k+1)^{2} \rightarrow 0
\end{equation}
if $k=o (n^{\delta})$. For any $M>0$, $k\leq n^{M nps^2/k \log n}$ holds since $k= O(\sqrt{nps^2})$ and $nps^2=\omega(1)$, and $k \leq n^{Mnqs^2/\log n}$ holds since $k=O(nqs^2)$ and $nqs^2= \omega(1)$. Thus, $k=o(n^\delta)$ holds, where $k \leq C(\sqrt{nps^2} \wedge nqs^2 \wedge n^c)$ for a sufficiently small $c$ and $C$.

Therefore, we can get
$$
\mathbb{E}\left[\mathbb{P}\left(\hat{\tau} \neq \tau_{*} \mid \sigma, \tau_{*}\right) 1\left(\mathcal{F}_{\epsilon}\right)\right] \stackrel{(a)}{\rightarrow} 0 \text{ as } n \rightarrow \infty
$$
where $(a)$ holds since $k\leq C(\sqrt{nps^{2}} \wedge nqs^{2} \wedge n^c)$ for sufficiently small constants $c$ and $C$.

\section{Proof of Theorem \ref{thm:matching impossible}}
Recall that $A$ and $B$ are the adjacency matrices of graphs $G_1$ and $G_2$, respectively. Recall the MAP estimator
\begin{equation}\label{eq:MAP estimator}
\hat{\pi}_{\mathrm{MAP}} \in \underset{\pi \in \mathcal{S}_{n}}{\arg \max } \mathbb{P}\left(\pi_{*}=\pi \mid A, B, \sigma\right).
\end{equation}
We will use the similar method as that of \cite{RS21} to prove the impossibility for exact matching, but we extend the result to more general $p, q,s$ and $k$ regimes.
Even if there are $k$ communities, some lemmas proposed in \cite{RS21} can still be used interchangeably. We cite only the main lemmas below. %Detailed proofs can be found in \cite{RS21}.

Given $A$ and $B$, for $\pi \in \mathcal{S}_{n}$, define the set
$$
T^{\pi} :=\left\{i \in[n]: \forall j \in[n], A_{i, j} B_{\pi(i), \pi(j)}=0\right\}
$$
as well as $T_{r}^{\pi}:=T^{\pi} \cap V_{r}$ for all $r\in[k]$. We can bound the MAP estimator as follows:

\begin{lemma}[Extension of Lemma 3.4 in \citep{RS21}] \label{lem:MAP estimator}
	For sufficiently large $n$ and for any permutation $\pi \in \mathcal{S}_{n}$, we obtain that
	\begin{equation}
		\mathbb{P}\left(\hat{\pi}_{\mathrm{MAP}}=\pi \mid A, B, \sigma\right) \leq \prod^{k}_{r=1}\frac{1}{\left|T_{r}^{\pi}\right| ! }.
	\end{equation}
\end{lemma}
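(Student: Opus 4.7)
The plan is to generalize the symmetry argument of \cite{RS21}, which treated $k=2$, to arbitrary $k$. The guiding idea is that the posterior $\mathbb{P}(\pi_* = \pi \mid A, B, \sigma)$ is invariant under post-composing $\pi$ with any community-preserving permutation of $T^\pi$; since there are exactly $\prod_{r=1}^{k} |T_r^\pi|!$ such permutations, they all lie in the posterior-argmax set $M$ whenever $\pi$ does. Assuming the MAP estimator breaks ties uniformly at random, it then assigns mass at most $1/|M| \leq 1/\prod_r |T_r^\pi|!$ to any single $\pi$; when $\pi \notin M$, $\mathbb{P}(\hat{\pi}_{\mathrm{MAP}} = \pi \mid A, B, \sigma) = 0$ and the bound is trivial.

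The main step is the invariance claim. For a transposition $\tau = (u\, u')$ within $T_r^\pi$, set $\pi' = \pi \circ \tau$. Writing $\mathbb{P}(A, B \mid \pi_* = \pi, \sigma) = \prod_{\{i,j\}} f_{\sigma_i, \sigma_j}(A_{ij}, B_{\pi(i), \pi(j)})$ where $f_{c, c'}$ denotes the correlated-SBM joint probability on an intra-community ($c = c'$) or inter-community ($c \neq c'$) pair, the ratio $\mathbb{P}(A, B \mid \pi_* = \pi', \sigma) / \mathbb{P}(A, B \mid \pi_* = \pi, \sigma)$ factors over $j \notin \{u, u'\}$ into local factors
\begin{equation*}
r_j = \frac{f_{r, \sigma_j}(A_{uj}, B_{\pi(u'), \pi(j)}) \, f_{r, \sigma_j}(A_{u'j}, B_{\pi(u), \pi(j)})}{f_{r, \sigma_j}(A_{uj}, B_{\pi(u), \pi(j)}) \, f_{r, \sigma_j}(A_{u'j}, B_{\pi(u'), \pi(j)})}.
\end{equation*}
Since $u, u' \in T^\pi$ rules out $(A, B) = (1, 1)$ at $\{u, j\}$ and $\{u', j\}$, a case-by-case enumeration shows $r_j = 1$ except when the ``crossed'' pattern $\{(A_{uj}, B_{\pi(u), \pi(j)}), (A_{u'j}, B_{\pi(u'), \pi(j)})\} = \{(0, 1), (1, 0)\}$ occurs, in which case $r_j = f_{r, \sigma_j}(0, 0) f_{r, \sigma_j}(1, 1) / (f_{r, \sigma_j}(0, 1) f_{r, \sigma_j}(1, 0))$, which is strictly greater than $1$ for $n$ large enough under the sparsity regime of Theorem \ref{thm:matching impossible}. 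The existence of such a $j$ would contradict the maximality of $\pi$, so none exists and the two likelihoods are equal.

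To extend this to arbitrary community-preserving permutations $\rho$ of $T^\pi$, I would decompose $\rho$ into transpositions within the $T_r^\pi$ and iterate the pairwise argument, using that each intermediate permutation remains a maximizer and inherits the structural no-crossed-pattern condition. A parallel argument using the symmetry of $B$ handles pairs $\{u, u'\}$ with both endpoints in $T^\pi$. The main obstacle is the bookkeeping of this inheritance: the set $T^{\pi \circ \tau}$ need not equal $T^\pi$ in general, so care is needed to show that the property propagates through each transposition. This step parallels the $k=2$ argument in \cite{RS21}, the only substantive change being that $\sigma_j$ can now take $k$ values and the appropriate intra-/inter-community factor $f_{r, \sigma_j}$ is used in place of a two-community probability.
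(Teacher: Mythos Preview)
The paper does not give its own proof of this lemma; it states it as a direct extension of Lemma~3.4 in \cite{RS21} and refers the reader there (see the remark ``Even if there are $k$ communities, some lemmas proposed in \cite{RS21} can still be used interchangeably. We cite only the main lemmas below.''). Your proposal is exactly that \cite{RS21} symmetry argument generalized from $k=2$ to arbitrary $k$, so it is the intended route.

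On the bookkeeping obstacle you flag: once $\pi$ is a maximizer and you have established that no crossed pattern occurs between $u,u'\in T_r^\pi$ at any $j$, you in fact get $T^{\pi}=T^{\pi\circ\tau}$, so the concern dissolves. For $v\in T^\pi\setminus\{u,u'\}$, the only new products to check after the swap are $A_{vu}B_{\pi(v),\pi(u')}$ and $A_{vu'}B_{\pi(v),\pi(u)}$; if, say, the first were $1$, then using $u\in T^\pi$ gives $(A_{uv},B_{\pi(u),\pi(v)})=(1,0)$ and using $u'\in T^\pi$ gives $(A_{u'v},B_{\pi(u'),\pi(v)})=(0,1)$, precisely the forbidden crossed pattern at $j=v$. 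The same crossed-pattern exclusion places $u,u'$ themselves in $T^{\pi\circ\tau}$, and symmetry (applying the argument to the maximizer $\pi\circ\tau$) gives equality. Hence the induction over transpositions proceeds with the \emph{same} set $T^\pi$ throughout, and the count $\prod_{r=1}^{k}|T_r^\pi|!$ of community-preserving permutations of $T^\pi$ lands inside the argmax set as claimed.
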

%Proof is omitted as it can be proven in the same way as the original proof (Lemma 3.4 in \citep{RS21}). 
Let us define
\begin{equation}
    \P_{\pi}(\cdot):=\P(\cdot|\pi_{*}=\pi),
\end{equation}
where $\pi \in S_{n}$. 
Furthermore, let us use $\E_\pi$ and $\operatorname{Var}_\pi$ to represent the expectation and variance operators that correspond to the probability measure $\P_\pi$.
%\color{red}Additionally, let $\mathbb{E}_\pi$ and $\operatorname{Var}_\pi$ denote the expectation and variance operators corresponding to the measure $\mathbb{P}_\pi$. \color{black}

\begin{lemma}[Extension of Lemma 3.5 in \citep{RS21}]\label{lem:T bound}
	For $\epsilon \in (0,1)$, suppose that $ns^{2}(p+(k-1)q) / k< (1-\epsilon) \log n$. Then there exists $\delta >\gamma>0$ such that if $k=o\left( n^{\delta -\gamma} \right)$, then
\begin{equation}\label{eq:isolated}
	\lim _{n \rightarrow \infty} \min _{\pi \in \mathcal{S}_{n}} \mathbb{P}_{\pi}\left(\left|T_{r}^{\pi}\right| \geq n^{\gamma}/2  \; \exists r\in[k]\right)=1.
\end{equation}
\end{lemma}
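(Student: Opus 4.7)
The plan is a first-plus-second moment argument on the vertex sets $T_r^\pi$, carried out conditionally on $\sigma$ on the balanced-community event $\mathcal{F}_{\epsilon'}$ for a small auxiliary constant $\epsilon' > 0$. Since $k = o(n^{\delta - \gamma})$ forces $k^{2}\log k = o(n)$ as soon as $\delta - \gamma < 1/2$, Lemma \ref{lem:balanced community} gives $\mathbb{P}(\mathcal{F}_{\epsilon'}) = 1 - o(1)$, and every bound below will be uniform in $\pi$, so the minimum over $\pi \in \mathcal{S}_n$ in the statement is harmless.

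\textbf{First moment.} Under $\mathbb{P}_\pi$ (i.e.\ $\pi_* = \pi$), the joint law of $(A_{i,j}, B_{\pi(i),\pi(j)}) = (A_{i,j}, B'_{i,j})$ is the correlated-SBM edge law, and the pairs $\{(A_{i,j}, B_{\pi(i),\pi(j)})\}_{j \neq i}$ are mutually independent because they involve disjoint edges of the parent graph. Thus for $i \in V_r$,
\begin{equation*}
\mathbb{P}_\pi(i \in T^\pi \mid \sigma) = (1-s^{2}p)^{|V_r|-1}(1-s^{2}q)^{n-|V_r|}.
\end{equation*}
On $\mathcal{F}_{\epsilon'}$, using $\log(1-x) = -x(1+o(1))$ for $x = s^{2}p, s^{2}q = o(1)$ and invoking the hypothesis $ns^{2}(p+(k-1)q)/k < (1-\epsilon)\log n$, the right-hand side is at least $n^{-(1-\epsilon)(1+o(1))}$. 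Summing over $i \in [n]$ and applying pigeonhole over the $k$ communities yields some $r^{*} \in [k]$ with $\mathbb{E}_\pi[|T_{r^{*}}^\pi| \mid \sigma] \geq n^{\epsilon - o(1)}/k$.

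\textbf{Second moment.} For distinct $i, j \in V_{r^{*}}$, the events $\{i \in T^\pi\}$ and $\{j \in T^\pi\}$ share randomness only through the pair $(A_{i,j}, B_{\pi(i),\pi(j)})$, since for every $\ell \notin \{i,j\}$ the underlying $G$-edges $(i,\ell)$ and $(j,\ell)$ are disjoint. A direct factorization gives
\begin{equation*}
\mathbb{P}_\pi(i,j \in T^\pi \mid \sigma) = \frac{\mathbb{P}_\pi(i \in T^\pi \mid \sigma)\,\mathbb{P}_\pi(j \in T^\pi \mid \sigma)}{\mathbb{P}_\pi(A_{i,j}B_{\pi(i),\pi(j)} = 0 \mid \sigma)} \leq (1 + O(s^{2}p))\,\mathbb{P}_\pi(i \in T^\pi \mid \sigma)\,\mathbb{P}_\pi(j \in T^\pi \mid \sigma).
\end{equation*}
Summing off-diagonal pairs and adding the diagonal trivially yields $\operatorname{Var}_\pi(|T_{r^{*}}^\pi| \mid \sigma) \leq \mathbb{E}_\pi[|T_{r^{*}}^\pi| \mid \sigma] + O(s^{2}p)\,\mathbb{E}_\pi[|T_{r^{*}}^\pi| \mid \sigma]^{2}$. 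Because $ps^{2} = o(1)$ and $\mathbb{E}_\pi[|T_{r^{*}}^\pi| \mid \sigma] \to \infty$, Chebyshev's inequality delivers $|T_{r^{*}}^\pi| \geq \mathbb{E}_\pi[|T_{r^{*}}^\pi| \mid \sigma]/2$ with conditional probability $1 - o(1)$.

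\textbf{Closing and main obstacle.} Fix any $\delta \in (0, \epsilon)$ and $\gamma \in (0, \delta)$. Under $k = o(n^{\delta - \gamma})$, the positive gap $\epsilon - \delta$ absorbs the $o(1)$ losses from the $\log(1-x)$ expansion and the balanced-community slack, giving $\mathbb{E}_\pi[|T_{r^{*}}^\pi| \mid \sigma] \geq n^{\gamma}(1+o(1))$. Combined with concentration, $|T_{r^{*}}^\pi| \geq n^{\gamma}/2$ with probability $1 - o(1)$, uniformly in $\pi$, which is exactly \eqref{eq:isolated}. The main technical obstacle I foresee is the bookkeeping of three separate multiplicative slacks: the $(1+\epsilon'/2)$ from the upper bound on $|V_r|$ in the first moment, the $(1+o(1))$ from the Taylor expansion of $\log(1-x)$, and the $(1+O(s^{2}p))$ factor in the second-moment ratio which must remain negligible after being multiplied by $\mathbb{E}_\pi[|T_{r^{*}}^\pi|]^{2}$ growing polynomially in $n$. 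All three are controlled by the sparsity hypothesis $ps^{2} = o(1)$ and by taking $\epsilon'$ much smaller than both $\epsilon - \delta$ and $\delta - \gamma$, which is why the statement needs the strict inequalities $\gamma < \delta < \epsilon$ rather than equalities.
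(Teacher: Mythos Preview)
Your proposal is correct and follows essentially the same route as the paper: a first-and-second-moment argument on $|T_r^\pi|$ conditional on $\sigma$ and on the balanced-community event, using the exact formula $\mathbb{P}_\pi(i\in T_r^\pi\mid\sigma)=(1-s^2p)^{|V_r|-1}(1-s^2q)^{n-|V_r|}$, the single-shared-edge covariance identity, and Chebyshev. The only cosmetic differences are that the paper fixes $r=1$ directly (on $\mathcal{F}_\epsilon$ every community works, so your pigeonhole step is unnecessary) and chooses the specific value $\delta = 1-(1+\epsilon/2)^2 ns^2(p+(k-1)q)/(k\log n)$ rather than an arbitrary constant $\delta\in(0,\epsilon)$.
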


 \subsection{Proof of Theorem \ref{thm:matching impossible}}
      The following equation can be obtained. % in the same way as the proof in \citep{RS21}.
\begin{equation}\label{eq:MAP bound}
\begin{aligned}
    	\mathbb{P}\left(\hat{\pi}_{\mathrm{MAP}}=\pi_{*}\right)&=\frac{1}{n !} \sum_{\pi \in \mathcal{S}_{n}} \sum_{A, B, \sigma} 
 \mathbb{P}\left(A, B, \sigma \mid \pi_{*}=\pi\right) \\
     &\quad \times \mathbb{P}\left(\hat{\pi}_{\mathrm{MAP}}=\pi \mid A, B, \sigma, \pi_{*}=\pi\right) \\
     &\leq \frac{1}{n !} \sum_{\pi \in \mathcal{S}_{n}} \mathbb{E}_{\pi}\left[\prod^{k}_{r=1}\frac{1}{\left|T_{r}^{\pi}\right| !}\right].
\end{aligned}
\end{equation}
The last inequality holds by  Lemma \ref{lem:MAP estimator}, and $\mathbb{P}\left(\hat{\pi}_{\mathrm{MAP}}=\pi \mid A, B, \sigma, \pi_{*}=\pi\right)=$ $\mathbb{P}\left(\hat{\pi}_{\text {MAP }}=\pi \mid A, B, \sigma\right)$ holds since the MAP estimator depends only on $A,B$ and $\sigma$.

Let us define an event 
\begin{equation}
    \mathcal{R}_{\pi}:=\left\{\left|T_{r}^{\pi}\right| \geq n^{\gamma}/2  \; \exists r\in[k]\right\}.
\end{equation}
Then, we can get 
\begin{equation}\label{eq:Epi}
	\mathbb{E}_{\pi}\left[\prod^{k}_{r=1}\frac{1}{\left|T_{r}^{\pi}\right| ! }\right] \leq \frac{1}{\left(n^{\gamma}/2\right) !
 }+\mathbb{P}_{\pi}\left(\mathcal{R}_{\pi}^{c}\right).
\end{equation}
By combining \eqref{eq:Epi} and \eqref{eq:MAP bound}, we obtain
\begin{equation}
    \begin{aligned}
        \mathbb{P}\left(\hat{\pi}_{\mathrm{MAP}}=\pi_{*}\right) &\leq \frac{1}{(n ^{\gamma }/2)!}+\frac{1}{n !} \sum_{\pi \in \mathcal{S}_{n}} \mathbb{P}_{\pi}\left(\mathcal{R}_{\pi}^{c}\right)\\
        &\leq \frac{1}{(n ^{\gamma}/2) !}+\max _{\pi \in \mathcal{S}_{n}} \mathbb{P}_{\pi}\left(\mathcal{R}_{\pi}^{c}\right) .
    \end{aligned}
\end{equation}
Let $\delta=1-(1+\epsilon / 2)^{2} ns^{2} \frac{(p+(k-1)q)}{ k \log n}$ and choose $\gamma$ less than $\delta$. Let $t$ be less than $\delta-\gamma$. By Lemma \ref{lem:T bound}, finally we get 
\begin{equation}
    \mathbb{P}\left(\hat{\pi}_{\mathrm{MAP}}=\pi_{*}\right) \rightarrow 0.
\end{equation}

\subsection{Proof of Lemma \ref{lem:T bound}}\label{proof of lemma T bound}
\begin{proof}
    Recall that 
$\mathcal{E}^{+}(\sigma):=\left\{(i, j) \in \mathcal{E}: \sigma_{i} = \sigma_{j}\right\}$ and $\mathcal{E}^{-}(\sigma):=\left\{(i, j) \in \mathcal{E}: \sigma_{i} \neq \sigma_{j}\right\}$.
For a fixed  $\pi \in \mathcal{S}_{n} $, we assume that $\sigma$ is given and $\pi_*=\pi$. Then, we obtain that
$$
A_{i, j} B_{\pi(i), \pi(j)} \sim \begin{cases}\text { Bern }\left(s^{2}p\right) & \text { if }(i, j) \in \mathcal{E}^{+}(\sigma), \\ \text { Bern }\left(s^{2} q\right) & \text { if }(i, j) \in \mathcal{E}^{-}(\sigma) .\end{cases}
$$
In addition, for a fixed $i \in [n]$, $A_{i, j} B_{\pi(i), \pi(j)}$ are mutually independent across $j\in [n]\backslash\{ i\}$. Thus, for $i \in V_1$, we obtain

%\color{red}
%Moreover, for fixed $i \in[n]$ the random variables $\left\{A_{i, j} B_{\pi(i), \pi(j)}\right\}_{j \in[n] \backslash\{i\}}$ are mutually independent. Hence for $i \in V_{1}$, we have
%\color{black}
\begin{equation}\label{eq:P}
    \mathbb{P}_{\pi}\left\{i \in T_1^{\pi} \mid \sigma\right\}=\left(1-s^{2}p\right)^{\left|V_{1}\right|-1}\left(1-s^{2} q\right)^{|V_{2}|+\ldots|V_{k}|}.
\end{equation}

On the event $\mathcal{F}_{\epsilon}$, we have $\left|V_{r}\right| \leq(1+\epsilon / 2) n / k$ for all $r \in [k]$. Thus, we have
\begin{equation}\label{eq:P1}
\begin{aligned}
    \log (1-sp^{2})^{|V_{1}|-1}&\geq \left(1+\frac{\epsilon}{2}\right) \frac{n}{k}\log \left(1-s^{2} p\right)\\
    &\stackrel{(a)}{\geq}  -\left(1+\frac{\epsilon}{2}\right)^{2} \frac{n}{k}s^{2}p
\end{aligned}
\end{equation}
and
\begin{equation}\label{eq:P2}
\begin{aligned}
    \log (1-sq^{2})^{|V_{2}|+\ldots+|V_{k}|}&\geq \left(1+\frac{\epsilon}{2}\right) \frac{(k-1)n}{k}\log \left(1-s^{2} q\right)\\
    &\stackrel{(b)}{\geq}  -\left(1+\frac{\epsilon}{2}\right)^{2} \frac{(k-1)n}{k}s^{2}q.
\end{aligned}
\end{equation}
The inequality $(a)$ and $(b)$ hold since $\log (1-t)  \geq -(1+\epsilon/2)t$ for all small enough $t>0$.

 For $\epsilon \in(0,1)$, let us define
$$
\begin{aligned}
	\delta:=1-(1+\epsilon / 2)^{2} ns^{2} \frac{(p+(k-1)q)}{ k \log n} .
\end{aligned}
$$
By assumption $ ns^{2}(p+(k-1)q) / k< (1-\epsilon) \log n$, $\delta>0$ holds.
From \eqref{eq:P}, \eqref{eq:P1}, \eqref{eq:P2} and the definition of $\delta$, we can get 
$$
\begin{aligned}
	\log \mathbb{P}_{\pi}\left(i \in T_1^{\pi} \mid \sigma\right) \geq (\delta-1) \log n.
\end{aligned}
$$
It means that  $\mathbb{P}_{\pi}\left(i \in T_1^{\pi} \mid \sigma\right) \geq n^{\delta-1}$, where $\mathcal{F}_{\epsilon}$ holds. Therefore, we can obtain
\begin{equation}\label{eq:exp T1}
	\mathbb{E}_{\pi}\left[\left|T_{1}^{\pi}\right| \mid \sigma\right] \geq\left|V_{1}\right| n^{\delta-1} \geq \frac{1-\epsilon / 2}{k} n^{\delta} \geq \frac{1}{2k} n^{\delta}.
\end{equation}
For $i \in V_1$, let us define an indicator variable as follow:
\begin{equation}
    X_i = \begin{cases}
          1 & \text { if }  i \in T_{1}^{\pi},\\
         0& \text { if }  i \notin T_{1}^{\pi}.
    \end{cases}
\end{equation}
Then, we have
\begin{equation}\label{eq:Var T1}
\begin{aligned}
 \operatorname{Var}_{\pi}\left(\left|T_{1}^{\pi}\right| \mid \sigma\right)&=\operatorname{Var}_{\pi}\left(\sum_{i \in V_{1}} X_{i} \mid \sigma\right)\\
 &=\sum_{i \in V_{1}} \sum_{j \in V_{1}}\mathbb{E}_{\pi}\left[X_{i}X_{j} \mid \sigma\right]- \left(\mathbb{E}_{\pi} \left[|T^{\pi}_{1}| \mid \sigma \right] \right)^{2},
\end{aligned}	
\end{equation}
and 
\begin{equation}
    \begin{aligned}
        \mathbb{E}_{\pi}[X_{i}X_{j} \mid \sigma] &= (1-ps^{2})^{2|V_{1}|-3}(1-qs^{2})^{2\left(|V_{2}|+\ldots+|V_{k}|\right)}\\
        &=\frac{ \left(\mathbb{E}_{\pi}[X_{i}\mid \sigma] \right)^{2}}{1-ps^{2}}\\
        &=\frac{ \left(\mathbb{E}_{\pi}[\left|T_{1}^{\pi} \right| \mid \sigma] \right)^{2}}{|V_{1}|^{2}(1-ps^{2})}.
    \end{aligned}
\end{equation}
Therefore, we can get
\begin{equation}\label{eq:Var}
    \begin{aligned}
        \operatorname{Var}_{\pi}\left(\left|T_{1}^{\pi}\right|\right. & \left. \mid \sigma\right) =  \frac{ |V_{1}|(|V_{1}|-1)\left(\mathbb{E}_{\pi}[\left|T_{1}^{\pi} \right| \mid \sigma] \right)^{2}}{|V_{1}|^{2}(1-ps^{2})} \\
        &+\mathbb{E}_{\pi}[\left|T_{1}^{\pi} \right| \mid \sigma]  - \left(\mathbb{E}_{\pi} \left[|T^{\pi}_{1}| \mid \sigma \right] \right)^{2}\\
        &= \mathbb{E}_{\pi}[\left|T_{1}^{\pi} \right| \mid \sigma] +  \frac{|V_{1}|ps^{2}-1}{|V_{1}|(1-ps^{2})}\left(\mathbb{E}_{\pi} \left[|T^{\pi}_{1}| \mid \sigma \right] \right)^{2}.
    \end{aligned}
\end{equation}
Applying Chebyshev's inequality, we can have
\begin{equation}
\begin{aligned}
     \P_{\pi}\left\{|T^{\pi}_{1}| \leq \frac{1}{2}\mathbb{E}_{\pi}\left[\left|T_{1}^{\pi}\right| \mid \sigma\right]\right\} &\leq 4 \frac{ \operatorname{Var}_{\pi}\left(\left|T_{1}^{\pi}\right| \mid \sigma\right) }{\left(\mathbb{E}_{\pi}\left[\left|T_{1}^{\pi}\right| \mid \sigma\right]\right)^{2}}\\
     &\stackrel{(a)}{=}\frac{1}{\mathbb{E}_{\pi}(\left|T_{1}^{\pi} \right| \mid \sigma)} +\frac{|V_{1}|ps^{2}-1}{|V_{1}|(1-ps^{2})} \\
     &\leq \frac{1}{\mathbb{E}_{\pi}(\left|T_{1}^{\pi} \right| \mid \sigma)} +\frac{ps^{2}}{1-ps^{2}} \\
     &\leq \frac{2k}{n^{\delta}}+\frac{ps^{2}}{1-ps^{2}}.
\end{aligned}   
\end{equation}
The equality $(a)$ holds from  \eqref{eq:Var}.
Therefore, for any $\gamma < \delta$, if $k=o(n^{\delta-\gamma})$ and $\frac{ps^{2}}{1-ps^{2}}=o(1)$ then we have 
\begin{equation*}
\begin{aligned}
      \P_{\pi} \left\{|T^{\pi}_{1}| \geq \frac{1}{2} n^{\gamma}\right\} &\geq \E \left.\left[ \P_{\pi} \left\{|T^{\pi}_{1}| \geq \frac{1}{2} n^{\gamma} \right| \sigma \right\}  \mathrm{1}(\mathcal{F}_{\epsilon})\right]\\
      & \geq (1-2n^{-\gamma} -2ps^2) \P\{\mathcal{F}_{\epsilon}\} \rightarrow 1.
\end{aligned}
  \end{equation*}
Note that $\P\{\mathcal{F}_{\epsilon}\} \rightarrow 1$ holds by Lemma \ref{lem:balanced community}. For all $\pi \in S_n$, this lower bound holds. Thus, we get
\begin{equation}
     \min _{\pi \in \mathcal{S}_{n}} \mathbb{P}_{\pi}\left(\left|T_{r}^{\pi}\right| \geq n^{\gamma} \; \exists r\in[k]\right) \rightarrow 1
\end{equation}
as $n\rightarrow \infty$. %We can complete the proof by taking $t$ less than $\delta-\gamma$.

\end{proof}

\section{Proof of Corollary \ref{cor:community recovery}}
Let us define the union graph $H:=G_1\cup G'_2$. We can see that  $H\sim$SBM$\left(n,p(1-(1-s)^2),q(1-(1-s)^2),k \right) $. We generate the graph $G_1 \vee_{\pi} G_2$ in the following way. 
\begin{itemize}
    \item If $(i,j)$ is an edge in $G_1$ or $(\pi(i),\pi(j))$ is an edge in $G_2$, $(i,j)$ becomes the edge in $G_1 \vee_{\pi} G_2$. 
\end{itemize}
For notational simplicity, we will use $H_{\pi}$ instead of $G_1 \vee_{\pi} G_2$. Then, we can see that $H=H_{\pi_*}$. Let $\hat{\pi}$ be the estimator that achieves the information theoretic limit for graph matching as the result of Theorem \ref{thm:matching achievability} and $\hat{\sigma}$ be the estimator that achieves information theoretic limit for exact community recovery as the result of Theorem 1 in \citep{ABKK17}. Then, we have
\begin{equation}
    \begin{aligned}
        &\P\left(\hat{\sigma}(H_{\hat{\pi}})\right) \neq \sigma) \\
        &\leq \P\left(\{\hat{\sigma}(H_{\hat{\pi}}) \neq \sigma  \} \text{ and }\{H_{\hat{\pi}}=H\}  \right)+\P\left(H_{\hat{\pi}} \neq H  \right)\\
        & \leq \P\left(\hat{\sigma}(H)\neq \sigma  \right)+\P\left(\hat{\pi} \neq \pi_{*}  \right).
    \end{aligned}
\end{equation}
We have $\P\left(\hat{\sigma}(H) \neq \sigma  \right) \to 0$ when  \eqref{eq:cor:community recovery condition} holds since $H\sim$SBM$\left(n,p(1-(1-s)^2),q(1-(1-s)^2),k \right) $ and the result of \citep{ABKK17}. We can also obtain $\P\left(\hat{\pi} \neq \pi_{*} \right) \to  0$ when \eqref{eq:cor:community recovery degree} holds since Theorem \ref{thm:matching achievability}. Thus, the proof is complete.

\section{Proof of Corollary \ref{cor:community recovery impossible} }
Let $H\sim$SBM$\left(n,p(1-(1-s)^2),q(1-(1-s)^2),k \right) $ and let us define $\sigma_{H}$ as the community labels of the graph $H$. For any estimator $\tilde{\sigma}$, we can obtain
\begin{equation}\label{eq:cor4 imposs1}
    \lim _{n \rightarrow \infty} \P\left(\tilde{\sigma}(H)=\sigma_{H} \right)=0
\end{equation}
when \eqref{eq:cor:community recovery impossible condition} holds since Theorem 2 in \citep{ABKK17}.

We construct two graphs $H_1$ and $H'_2$ in the following way.
\begin{itemize}
    \item If $(i,j)$ is not an edge in the graph $H$, it is also not an edge in both $H_1$ and $H'_2$
    \item If $(i,j)$ is an edge in the graph $H$, then
\end{itemize}
\begin{equation*}
    \begin{cases}
    & (i,j) \text{ is an edge in } H_1 \text{ and } H'_2 \text{ with probability } r_{11} ;\\
    &(i,j) \text{ is an edge in } H_1 \text{ but not an edge in } H'_2 \\ & \text{ with probability } r_{10};\\
    & (i,j) \text{ is an edge in } H'_2 \text{ but not an edge in } H_1 \\ & \text{ with probability } r_{01},
    \end{cases}
\end{equation*}
where 
\begin{equation}
    \begin{aligned}
        r_{11} = \frac{s^2 }{1-(1-s)^2},\quad r_{10}=r_{01} = \frac{s(1-s)}{1-(1-s)^2}.
    \end{aligned}
\end{equation}
It can be seen that $(H_1,H'_2,\sigma_{H})$ and $(G_1,G'_2,\sigma)$ have the same distribution. Let $H_2$ be the graph obtained by permuting the vertices of $H'_2$ by permutation $\pi \in S_n$. Then we can see that  $(H_1,H_2,\sigma_{H})$ and $(G_1,G_2,\sigma)$ also have the same distribution. Suppose that there exists an estimator $\hat{\sigma}$ such that 
\begin{equation}\label{eq:cor4 result}
     \limsup_{n \rightarrow \infty} \P\left(\hat{\sigma}(G_1,G_2)=\sigma \right)>0.
\end{equation}
Since $(H_1,H_2,\sigma_{H})$ and $(G_1,G_2,\sigma)$ have the same distribution, it follows that 
\begin{equation}\label{eq:cor4 imposs2}
     \limsup_{n \rightarrow \infty} \P\left(\hat{\sigma}(H_1,H_2)=\sigma_{H} \right)>0.
\end{equation}
Two graphs $H_1$ and $H_2$ are constructed from the graph $H$ through random sampling and random permutation, which is why \eqref{eq:cor4 imposs1} and \eqref{eq:cor4 imposs2} are contradictory. Thus, the proof is complete.

    \section{Proof of Corollary \ref{cor:multiple achieve} and \ref{cor:multiple impossible}}
Corollary \ref{cor:multiple achieve} can be proved by combining the proof of Theorem 1.6 in \citep{RS21} and the proof of Corollary \ref{cor:community recovery}. Corollary \ref{cor:community recovery impossible} can also be proved by combining the proof of Theorem 1.7 in \citep{RS21} and the proof of Corollary \ref{cor:community recovery impossible}.

\end{document}